\newtheorem{theorem}{Theorem} %numbering by default: (1), (2),...
\author[1]{Sheng Dai}
    \author[2]{Timo Kuosmanen}
    \author[1,3,\footnote{Corresponding author. \newline
    \hspace*{5mm} \textit{E-mail addresses:} \texttt{sheng.dai@aalto.fi (S. Dai)}, \texttt{timo.kuosmanen@utu.fi (T. Kuosmanen)},\\
    \hspace*{34mm} \texttt{xun.zhou@york.ac.uk (X. Zhou)}.}]{Xun Zhou}
    \affil[1~]{Aalto University School of Business}
    \affil[2~]{Turku School of Economics, University of Turku}
    \affil[3~]{Department of Environment and Geography, University of York}
\title{\bf Partial frontiers are not quantiles}
\date{May 2022}
\begin{document}
% revising the caption Figure XX: to Fig. XX.; Table XX: to Table XX.
%\captionsetup[figure]{labelfont={bf},labelformat={default},labelsep=period,name={Fig.}}
%\captionsetup[table]{labelfont={bf},labelformat={default},labelsep=period,name={Table}}

\citationmode{abbr}
\bibliographystyle{jbes}

\maketitle

\vfill
%\begin{center}
%Preprint submitted to \emph{Journal of Business \& Economic Statistics }
%\end{center}
\vfill  

\begin{abstract}
\noindent
Quantile regression and partial frontier are two distinct approaches to nonparametric quantile frontier estimation. In this article, we demonstrate that partial frontiers are not quantiles. Both convex and nonconvex technologies are considered. To this end, we propose convexified order-$\alpha$ as an alternative to convex quantile regression (CQR) and convex expectile regression (CER), and two new nonconvex estimators: isotonic CQR and isotonic CER as alternatives to order-$\alpha$. A Monte Carlo study shows that the partial frontier estimators perform relatively poorly and even can violate the quantile property, particularly at low quantiles. In addition, the simulation evidence shows that the indirect expectile approach to estimating quantiles generally outperforms the direct quantile estimations. We further find that the convex estimators outperform their nonconvex counterparts owing to their global shape constraints. An illustration of those estimators is provided using a real-world dataset of U.S. electric power plants.  
\\[5mm]
\noindent{{\bf Keywords}: Quantile estimation, Nonparametric regression, Isotonic regression, Shape constraints}
\end{abstract}
\vfill

\thispagestyle{empty}

%----------------%
%
%----------------%

\newpage
\setcounter{page}{1}
\setcounter{footnote}{0}
\pagenumbering{arabic}
\baselineskip 20pt

%----------------%
%
%----------------%

\section{Introduction}\label{sec:intro}

The use of nonparametric quantile estimation as a tool to model and estimate production, cost, and distance functions has grown exponentially (see, inter alia, \citename{Aragon2005}, \citeyear*{Aragon2005}; \citename{Behr2010}, \citeyear*{Behr2010}; \citename{Wang2014c}, \citeyear*{Wang2014c}; \citename{Kuosmanen2015d}, \citeyear*{Kuosmanen2015d}; \citename{Jradi2019b}, \citeyear*{Jradi2019b}; \citename{Kuosmanen2020}, \citeyear*{Kuosmanen2020}; \citename{Dai2020}, \citeyear*{Dai2020}). The nonparametric quantile frontier estimation is popular among econometricians due to the fact that it provides an overall picture of the conditional distributions at any given quantiles and is more robust to the choice of directional vector, random noise, heteroscedasticity, and outliers in comparison with the conventional full frontier estimation. Furthermore, it allows more accurate estimation of shadow prices, which are essential for efficient environmental policy and management, by taking inefficiency explicitly into account (\citename{Kuosmanen2020b}, \citeyear*{Kuosmanen2020b}).

There exist two distinct approaches to nonparametric quantile frontier estimation: \textit{partial frontier} and \textit{quantile regression} (see Table \ref{tab:tab1}).\footnote{
In addition to the methods included in Table \ref{tab:tab1}, other nonparametric quantile estimators such as quantile smoothing splines (\citename{koenker1994}, \citeyear*{koenker1994}) could also be used to estimate the production function. For the sake of brevity, methods that are not commonly applied in the present context are excluded from Table \ref{tab:tab1}.  
}
Notable partial frontier estimators include stratified data envelopment analysis (stratified DEA) (\citename{Lovell1994}, \citeyear*{Lovell1994}), order-$m$ (\citename{Cazals2002a}, \citeyear*{Cazals2002a}), context-dependent DEA (\citename{Seiford2003}, \citeyear*{Seiford2003}), order-$\alpha$ (\citename{Aragon2005}, \citeyear*{Aragon2005}), smooth order-$\alpha$ (\citename{Martins-Filho2008}, \citeyear*{Martins-Filho2008}), and quantile-DEA (\citename{Atwood2020}, \citeyear*{Atwood2020}). In parallel, the quantile regression based frontier estimation literature dates back to stochastic DEA proposed by \citeasnoun{Banker1991},\footnote{
In a broader term, \citeasnoun{Aigner1976} construct expectiles to estimate production frontiers by replacing
$L_1$-norm distance by squared $L_2$-norm, which is the first work to apply expectile estimation. 
}
which is subsequently extended to concave nonparametric quantile regression by \citeasnoun{Wang2014c}. To ensure the uniqueness of quantile function estimates, \citeasnoun{Kuosmanen2015d} and \citeasnoun{Kuosmanen2020b} propose an indirect estimation through convex expectile regression (CER) by using an asymmetric least squares objective function. 

The distinctive features between the partial frontier and quantile regression estimators lie in the estimation strategy and in the interpretation of the error term $\varepsilon$. The partial frontier approach usually fits a frontier estimator to a subset of observations; in contrast, the quantile regression approach fits a quantile function using an asymmetric norm to the full sample of observations, similar to linear quantile regression (\citename{Koenker1978}, \citeyear*{Koenker1978}). Regarding the interpretation of the error term $\varepsilon$, the partial frontier studies usually assume away stochastic noise and attribute the deviation from the frontier to inefficiency alone (which implies that $\varepsilon_i \le 0$), whereas the quantile regression studies typically assume that the error term $\varepsilon_i$ is a composite error term that includes both inefficiency and noise. 
\begin{table}[H]
	\caption{Classification of nonparametric quantile-like estimators.}
	\vspace{-1em}
	\label{tab:tab1}
	\renewcommand\arraystretch{1.15}
	{\footnotesize
		\begin{center}
			\begin{tabular}{lll}
				\toprule
				&  \multicolumn{1}{l}{Nonconvex}    & \multicolumn{1}{l}{Convex} \\ 
				\midrule
    			\textit{Partial frontier}       &\textit{order-$\alpha$}    & \textit{Convexified order-$\alpha$}\\
				                                & \citeasnoun{Aragon2005}   & \citeasnoun{Ferreira2020}; \\
				                                & \textit{order-$m$}   & Section~\ref{sec:coa}\\ 
				                                & \citeasnoun{Cazals2002a}  &                      \\
				\textit{Quantile regression}    & \textit{Isotonic CQR}             &\textit{CQR}\\
				                                & Section~\ref{sec:icqr}    & \citeasnoun{Wang2014c} \\
				\textit{Expectile regression}   &\textit{Isotonic CER}              &\textit{CER}\\
				                                & Section~\ref{sec:icqr}    & \citeasnoun{Kuosmanen2015d};\\
				                                &                           & \citeasnoun{Kuosmanen2020b} \\
				\bottomrule
			\end{tabular}%
		\end{center}
	}
	\vspace{-1em}
\end{table}

While partial frontiers can be useful for more robust frontier estimation than the conventional full frontier approaches (e.g., DEA), they tend to perform poorly especially when estimating low quantiles, as will be demonstrated in this paper. Intuitively, partial frontiers rely on only a subset of observations and hence may suffer from small sample bias, especially for low quantiles. Moreover, order-$\alpha$, the most widely used partial frontier estimator, distinguishes itself from the quantile regression approach in that it cannot ensure the observed data is strictly split into proportions $\alpha$ below and $1-\alpha$ above for any $0< \alpha <1$. In fact, an $\alpha$-frontier guarantees a $100\alpha$\% chance of the observed data locating below the $\alpha$-frontier, but the realization of the $\alpha$-frontier becomes increasingly uncertain when $\alpha$ approaches zero.

As shown in Table~\ref{tab:tab1}, the nonparametric quantile-like estimators can also be classified into \textit{convex} and \textit{nonconvex} estimators depending on whether the assumption of convexity is imposed on the production possibility set. Under relaxed convexity assumptions, order-$\alpha$ is a representative nonconvex partial frontier estimator, which has been extensively applied in frontier estimation and performance evaluation (see, e.g., \citename{Wheelock2008}, \citeyear*{Wheelock2008}; \citename{Wheelock2009}, \citeyear*{Wheelock2009}; \citename{Matallin-Saez2019}, \citeyear*{Matallin-Saez2019}; \citename{Kounetas2021}, \citeyear*{Kounetas2021}). In sharp contrast, there are no nonconvex quantile estimators existing in the literature. To fill this gap, this paper proposes isotonic CQR and isotonic CER as alternatives to the nonconvex partial frontier estimators (to be discussed in Section \ref{sec:icqr}).

Thus far, the largest stream of partial frontier studies has developed separately from the more recent convex quantile/expectile regression studies. During their respective development, several interesting questions remain open. Considering the estimation strategy of the partial frontier approach (e.g., order-$\alpha$), it is likely to be robust at high quantiles (e.g., 90\% or 99\% quantiles; \citename{Wheelock2008}, \citeyear*{Wheelock2008}), which are of primary interest to production frontier estimation. However, what is the performance of the partial frontier approach at low quantiles (e.g., 5\% or 10\% quantiles)? How does the order-alpha estimator perform in comparison with the CQR and CER estimators and their nonconvex counterparts? Further, the distinction between the convex quantile and expectile regression has caused some confusion; thus, what are the differences and similarities between these two estimators and which one is more efficient in estimating quantile functions?

This paper contributes to the nonparametric quantile frontier literature in two ways. First, we demonstrate that partial frontiers are not quantiles. To this end, we extend the current estimation toolbox by proposing two new nonconvex estimators: isotonic CQR and isotonic CER. We then compare the finite sample performance of isotonic CQR/CER and order-$\alpha$ through Monte Carlo simulations and find that the order-$\alpha$ estimator performs relatively poorly and even can violate the quantile property, particularly at low quantiles. To ensure a fair and comprehensive comparison, we also develop convexified order-$\alpha$ as an alternative to CQR and CER. The Monte Carlo simulations show that the quantile regression approach outperforms the partial frontier approach as well in the convex case. 

Our second contribution shows that the indirect expectile approach to estimating quantiles generally outperforms the direct quantile estimations. Since there exists a one-to-one mapping between quantiles and expectiles and the estimated expectile functions are unique, the indirect estimation of quantiles using expectiles should be a more efficient approach. We then compare the performance of direct and indirect estimation of quantiles and find that the indirect expectile estimations work better than the direct quantile estimations in most scenarios.

In addition to the two main contributions above, we present a systematic classification of the quantile-like estimators and clarify their interpretations. The alternative estimators are illustrated in an empirical application, where the order-$\alpha$ estimator is found to violate the quantile and monotonicity properties. Furthermore, our simulation evidence confirms that the convex estimators (CQR/CER and convexified order-$\alpha$) outperform their nonconvex counterparts (isotonic CQR/CER and order-$\alpha$) owing to their global shape constraints.

The rest of this paper is organized as follows. Section~\ref{sec:meth} describes the convex quantile and expectile regression estimators and the partial frontier estimators. Section~\ref{sec:icqr} introduces the proposed isotonic convex quantile and expectile regression estimators. To illustrate and visualize the estimated quantile functions and partial frontiers, an empirical application to a dataset of U.S. electric power plants is presented in Section~\ref{sec:appli}. Section~\ref{sec:mc} performs a Monte Carlo study to compare the finite sample performance among nonparametric quantile frontier estimators. Section~\ref{sec:conc} concludes this paper with suggested avenues for future research. Formal proof and additional Monte Carlo simulation evidence are provided in Appendices~\ref{app:proof3} and \ref{app:experiments}.

%----------------%
%
%----------------%

\section{Models of quantile production function}\label{sec:meth}

\subsection{Quantile production function}
Suppose we observe input and output data $\{(\BX_i,Y_i)\}_{i=1}^n$, where $\BX \in \real^d$ is the $d$-dimensional input vector and $Y \in \real$ is the single output. Consider the following nonparametric regression model
\begin{equation}
\begin{aligned}
\label{eq:reg}
Y_i=f(\BX_i) + \varepsilon_i, \quad \mbox{ for } i = 1, \ldots, n,
\end{aligned}
\end{equation}	
where the regression function takes the form $f(x) = E(Y \,|\, \BX=x)$ and $\varepsilon_i$ is the error term satisfying $E(\varepsilon_i \,|\, \BX_i)=0$. The nonparametric model \eqref{eq:reg} does not assume any specific functional form for the regression function $f$, but rather assumes that $f$ satisfies certain axiomatic properties (e.g., monotonicity, concavity/convexity). As such, one can readily use this nonparametric model to characterize a production function by imposing shape constraints for all values of $x$ in the support of $\BX$ (see, e.g., \citename{Kuosmanen2008}, \citeyear*{Kuosmanen2008}; \citename{Kuosmanen2010a}, \citeyear*{Kuosmanen2010a}; \citename{Yagi2018}, \citeyear*{Yagi2018}).

In analogy to linear quantile regression, the nonparametric model \eqref{eq:reg} can be rephrased as a conditional nonparametric quantile function model. For any given quantile $\tau \in (0,1)$, the conditional nonparametric quantile function $Q_{Y_i}(\tau \,|\, \BX)$ is defined as
\begin{equation}
\begin{aligned}
\label{eq:qreg}
Q_{Y_i}(\tau \, | \, \BX_i)=f(\BX_i)+F_{\varepsilon_i}^{-1}(\tau),
\end{aligned}
\end{equation}	
where the quantile $\tau$ refers to that $Q_{Y_i}$ splits the observed data into proportions $\tau$ below and $1-\tau$ above; $F_{\varepsilon_i}$ is the cumulative distribution function of the error term $\varepsilon_i$. The estimation of $Q_{Y_i}$ is of central interest to nonparametric quantile regression, and there are a variety of estimators available in the literature as reviewed in Section~\ref{sec:intro}. In what follows we focus on the two distinct types of estimators (quantile regression and partial frontier) as well as their extensions.  

\subsection{Convex quantile regression}\label{sec:cqr}

By imposing monotonicity and global concavity on $f$, we can estimate the $\tau\textsuperscript{th}$ quantile production function \eqref{eq:qreg} by solving the following linear programming (LP) problem (\citename{Wang2014c}, \citeyear*{Wang2014c})\footnote{
In practice, problem \eqref{eq:cqr} can be solved by standard algorithms for LP such as CPLEX or MOSEK.}
\begin{alignat}{2}
 \underset{\alpha,\bbeta,\varepsilon^\text{+},\varepsilon^{-}}{\mathop{\min }}&\,\tau \sum\limits_{i=1}^{n}{\varepsilon _{i}^{+}}+(1-\tau )\sum\limits_{i=1}^{n}{\varepsilon _{i}^{-}}  &{}&  \label{eq:cqr}\\ 
\mbox{\textit{s.t.}}\quad
& y_i=\mathbf{\alpha}_i+ \bbeta_i^{'}\bx_i+\varepsilon_i^{+}-\varepsilon_i^{-} &\quad& \forall i \notag \\
& \alpha_i+\bbeta_i^{'}\bx_i \le \alpha_h+\bbeta_h^{'}\bx_i  &{}& \forall i,h \notag \\
& \bbeta_i\ge \bzero &{}& \forall i  \notag \\
& \varepsilon_i^{+}\ge 0,\ \varepsilon_i^{-} \ge 0 &{}& \forall i \notag
\end{alignat}
where the objective function is convex but not strictly convex on $\real^n$. Note that the error term $\varepsilon_i$ in \eqref{eq:reg} is now decomposed into two non-negative components $\varepsilon_i^+$ and $\varepsilon_i^-$ (i.e., $\varepsilon_i=\varepsilon_i^{+}-\varepsilon_i^{-}$). The first set of constraints can be interpreted as a multivariate regression equation. The second set of constraints, i.e., a system of Afriat inequalities, imposes concavity. The third set of constraints imposes monotonicity, and the last refers to the sign constraints on the decomposed error terms. 

Since it was proposed by \citeasnoun{Wang2014c}, convex quantile regression (CQR), as formulated in \eqref{eq:cqr}, has been applied to a number of studies because of its appealing features (e.g., \citename{Kuosmanen2015d}, \citeyear*{Kuosmanen2015d}; \citename{Jradi2019b}, \citeyear*{Jradi2019b}; \citename{Kuosmanen2020b}, \citeyear*{Kuosmanen2020b}). For example, the CQR estimator aims to estimate the conditional median or other quantiles of the response variable, and thus is more robust to random noise and heteroscedasticity than other central tendency estimators such as convex nonparametric least squares (\citename{Kuosmanen2008}, \citeyear*{Kuosmanen2008}) and penalized convex regression (\citename{Bertsimas2020}, \citeyear*{Bertsimas2020}). Furthermore, the CQR estimator is relatively computationally simple due to its LP formulation. Formally, the key properties of the CQR estimator are summarized in Theorem \ref{the:the1}.
\begin{theorem}
Let $\hat{\varepsilon}_i^{+}$ and $\hat{\varepsilon}_i^{-}$ be the optimal residuals estimated by CQR, $\hat{Q}_{y_i}(\tau \,|\, \bx)$ be the fitted values, and $n$ be the total number of observations.
\begin{enumerate}[label= \roman*), leftmargin=2\parindent]
    \setlength{\itemsep}{1pt}
    \setlength{\parskip}{0pt}
    \setlength{\parsep}{0pt}
    \item For any $\tau \in (0, 1)$, the number of strict positive residuals ($\hat{\varepsilon}_i^{+} > 0$) by $n_\tau^+$ and the number of strict negative residuals ($\hat{\varepsilon}_i^{-} > 0$) by $n_\tau^-$ always satisfy the inequalities: 
    \[
    \frac{n_\tau^+}{n} \le 1-\tau \quad \mbox{and} \quad \frac{n_\tau^-}{n} \le \tau.
    \]
    \item $\{\bx_i, y_i\}_{i=1}^\infty$are sequence of i.i.d. random variables generated by model \eqref{eq:reg}. With probability 1, asymptotically, $\frac{n_\tau^-}{n}$ equals $\tau$.
    \item In the optimal solution to problem \eqref{eq:cqr}, $\hat{Q}_{y_i}$ is not necessarily unique, even for the observed data points $\{\bx_i, y_i\}_{i=1}^n$.
\end{enumerate}
\label{the:the1}
\end{theorem}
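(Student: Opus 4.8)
My plan is to treat the three parts separately, since they are quite different in character. Part (i) is a combinatorial/LP optimality argument, part (ii) is an asymptotic consistency-type statement, and part (iii) is a non-uniqueness claim that is established by exhibiting the right kind of indeterminacy in the optimal solution.

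For part (i), the plan is to use a perturbation (subgradient/complementary slackness) argument at the optimum of the LP \eqref{eq:cqr}. Fix the optimal fitted hyperplanes $\hat\alpha_i, \hat\bbeta_i$. The key observation is that the objective $\tau\sum \varepsilon_i^+ + (1-\tau)\sum\varepsilon_i^-$, as a function of a uniform vertical shift of all fitted values $\hat Q_{y_i} \mapsto \hat Q_{y_i} + t$, is convex and piecewise linear in $t$, and such a shift preserves all the Afriat and monotonicity constraints (it shifts every intercept $\alpha_i$ by $t$). At $t=0$ we are at a minimizer, so the right derivative at $0$ must be $\ge 0$ and the left derivative must be $\le 0$. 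Increasing $t$ by $dt$ changes the objective by $\big(\tau \cdot \#\{i: \hat\varepsilon_i^- > 0\} + \tau\cdot(\text{some of the zero-residual terms}) - (1-\tau)\cdot\#\{i: \hat\varepsilon_i^+>0\} - \ldots\big)dt$; carefully bounding the contribution of the ties (residuals exactly zero) gives $n_\tau^- \cdot \tau \le n_\tau^+\cdot(1-\tau) + (\text{ties})$ type inequalities, and a clean accounting yields $n_\tau^+/n \le 1-\tau$ and $n_\tau^-/n \le \tau$. I expect the bookkeeping of the boundary (zero-residual) observations to be the only delicate point here; the cleanest route is to note $n_\tau^+ + n_\tau^- + n_\tau^0 = n$ where $n_\tau^0$ counts exact zeros, and show the one-sided derivative conditions force $n_\tau^+ \le (1-\tau)(n_\tau^+ + n_\tau^0 + \ldots)$.

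For part (ii), the plan is to combine the finite-sample bound from part (i) with consistency of the CQR estimator. By part (i) we already have $n_\tau^-/n \le \tau$ almost surely for every $n$, so it suffices to show $\liminf_n n_\tau^-/n \ge \tau$ a.s. Here I would invoke (or cite) the known consistency of convex quantile regression — that $\hat Q_{y_i}(\tau\mid\cdot)$ converges to the true conditional $\tau$-quantile $f(\cdot) + F_\varepsilon^{-1}(\tau)$ — together with a Glivenko–Cantelli / strong law argument: the empirical fraction of points lying strictly below the true quantile surface converges a.s.\ to $\tau$ (assuming $F_\varepsilon$ has no atom at its $\tau$-quantile, which the i.i.d.\ continuous-error setup of model \eqref{eq:reg} supplies), and the residual signs of the estimator asymptotically agree with the signs relative to the true surface on the set where the error density is positive. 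Combining the two one-sided bounds gives $n_\tau^-/n \to \tau$ a.s. The main obstacle in this part is making the "residual signs track the true signs" step rigorous uniformly; I would handle it by a sandwiching argument using uniform consistency on compact subsets of the support of $\BX$ plus the continuity of $F_\varepsilon$.

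For part (iii), the plan is simply to exhibit the non-uniqueness directly, which the structure of \eqref{eq:cqr} makes easy: the objective is convex but, as the text already emphasizes, not strictly convex, and the fitted values $\hat Q_{y_i}$ are determined through the $(\alpha_i,\bbeta_i)$ which enter only via the regression equations and the Afriat/monotonicity inequalities — none of which pin down a unique supporting hyperplane at a data point when several are feasible and achieve the same objective value. Concretely, I would give a minimal example (e.g.\ $n=2$ observations in $d=1$) in which there is an interval of vertical shifts, or a family of slope choices, all feasible and all attaining the optimal objective value, producing distinct fitted values at the observed points; this furnishes a counterexample to uniqueness even at the data points. This part is essentially a remark and presents no real obstacle. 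Overall the genuine mathematical content — and the step I expect to write most carefully — is the derivative/complementary-slackness argument of part (i), with part (ii)'s sign-tracking step a close second.
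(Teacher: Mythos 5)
First, note that the paper itself does not supply a proof of Theorem~\ref{the:the1}: it defers entirely to \citeasnoun{Wang2014c} (Theorem~1) and \citeasnoun{Kuosmanen2020b} (Propositions~1 and~2). Your part~(i) is essentially the argument used in those references --- a uniform vertical shift $t$ of all fitted values preserves the Afriat, monotonicity, and sign constraints, the objective is piecewise linear convex in $t$, and the two one-sided derivative conditions at $t=0$ deliver $n_\tau^+ \le (1-\tau)n$ and $n_\tau^- \le \tau n$. Your final accounting $n_\tau^+ \le (1-\tau)(n_\tau^+ + n_\tau^- + n_\tau^0)$ is the right endpoint, though in the intermediate displayed derivative you have attached $\tau$ to the count of negative residuals and $(1-\tau)$ to the positive ones; shifting the surface \emph{up} decreases each $\varepsilon_i^+$ at rate $\tau$ and increases each $\varepsilon_i^-$ (including the ties) at rate $1-\tau$, so the coefficients must be swapped before the inequality comes out correctly. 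For part~(ii), your route via uniform consistency of the CQR estimator plus a Glivenko--Cantelli argument is legitimate but imports a substantial external result; a lighter route is to observe that $n_\tau^+ + n_\tau^- + n_\tau^0 = n$ together with both bounds from part~(i) forces $n_\tau^-/n \ge \tau - n_\tau^0/n$, so it suffices to show the fraction of exactly interpolated observations vanishes almost surely.

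The one genuine gap is the counterexample you propose for part~(iii). With $n=2$ observations at \emph{distinct} inputs in $d=1$, the LP attains objective value $0$ by interpolating both points with a monotone affine function, and any optimal solution must then have $\hat\varepsilon_i^+=\hat\varepsilon_i^-=0$, hence $\hat Q_{y_i}=y_i$ uniquely at both data points. The ``family of slope choices'' you mention does exist, but it only shows non-uniqueness of the coefficients $\hat\beta_i$ (and of the fitted function away from the data), not of the fitted values $\hat Q_{y_i}$ at the observed points, which is what part~(iii) asserts. To make the example work you need tied inputs --- e.g.\ $x_1=x_2$, $y_1<y_2$, $\tau=1/2$, where every fitted value in $[y_1,y_2]$ attains the same objective $\tfrac12(y_2-y_1)$ --- which is exactly the degeneracy the paper's own discussion in Section~\ref{sec:cqr} points to, or alternatively a configuration with $n\ge 3$ distinct inputs chosen so that the optimal objective is positive and attained on a face of the feasible polyhedron.
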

\begin{proof}
See \citename{Wang2014c} (\citeyear*{Wang2014c}; Theorem 1) and \citename{Kuosmanen2020b} (\citeyear*{Kuosmanen2020b}; Propositions 1 and 2).
\end{proof}

One notable drawback of CQR is that the optimal solution to problem \eqref{eq:cqr} is not necessarily unique (see Theorem \ref{the:the1}), which also affects the estimated intercepts and slope coefficients (i.e., $\hat{\alpha}_i$ and $\hat{\beta}_{ij}$). This non-uniqueness problem of quantile regression could be safely assumed away if the regressors $\bx$ are randomly drawn from a continuous distribution. This, however, is often not the case in real applications, where two or more units may use exactly the same amount of inputs (i.e., $x_i = x_j$ for units $i$ and $j$). In production economics, the non-uniqueness of CQR emerges as a problem, particularly in samples where inputs $\bx$ are discrete variables (e.g., considering the number of employees in small firms).

\subsection{Convex expectile regression}\label{sec:cer}

To ensure unique estimates of the quantile functions, \citeasnoun{Kuosmanen2015d} propose an indirect estimation of quantiles through expectile regression (\citename{Newey1987}, \citeyear*{Newey1987}). Given an expectile $\tilde{\tau} \in (0, 1)$, we can estimate the quantile function \eqref{eq:qreg} indirectly by first solving the following quadratic programming problem
\begin{alignat}{2}
\underset{\alpha,\bbeta,\varepsilon^{+},\varepsilon^{-}}{\mathop{\min}}&\,\tilde{\tau} \sum\limits_{i=1}^n(\varepsilon _i^{+})^2+(1-\tilde{\tau} )\sum\limits_{i=1}^n(\varepsilon_i^{-})^2   &{}&  \label{eq:cer} \\ 
\mbox{\textit{s.t.}}\quad
& y_i=\mathbf{\alpha}_i+ \bbeta_i^{'}\bx_i+\varepsilon _i^{+}-\varepsilon _i^{-} &\quad& \forall i \notag \\
& \mathbf{\alpha}_i+\bbeta_{i}^{'}{{\bx}_{i}}\le \mathbf{\alpha}_h+\bbeta _h^{'}\bx_i  &{}& \forall i,h \notag \\
& \bbeta_i\ge \bzero &{}& \forall i \notag \\
& \varepsilon _i^{+}\ge 0,\ \varepsilon_i^{-} \ge 0 &{}& \forall i \notag
\end{alignat}
where the CER problem now minimizes the asymmetric squared deviations instead of the absolute deviations in \eqref{eq:cqr}. The quadratic objective function in \eqref{eq:cer} can guarantee the uniqueness of estimated quantile functions. The corresponding expectile property can be stated in Theorem \ref{the:the2}.
\begin{theorem}
Let $\hat{\varepsilon}_i^{+} $and $\hat{\varepsilon}_i^{-}$ be the optimal residuals estimated by CER. For any $\tilde{\tau} \in  (0, 1)$, we have
    \[
    \tilde{\tau}  = \frac{\sum\limits_{i=1}^n \hat{\varepsilon}_i^-}{\sum\limits_{i=1}^n \hat{\varepsilon}_i^+ + \sum\limits_{i=1}^n \hat{\varepsilon}_i^-}.
    \]
\label{the:the2}
\end{theorem}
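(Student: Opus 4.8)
The plan is to obtain the identity from a single first-order optimality condition, produced by perturbing all the intercepts of an optimal CER solution by a common constant. Let $(\hat\alpha_i,\hat\bbeta_i,\hat\varepsilon_i^{+},\hat\varepsilon_i^{-})$ solve \eqref{eq:cer}, and set $e_i := \hat\varepsilon_i^{+}-\hat\varepsilon_i^{-} = y_i-\hat\alpha_i-\hat\bbeta_i'\bx_i$. First I would record a complementarity fact: at the optimum, for each $i$ at least one of $\hat\varepsilon_i^{+},\hat\varepsilon_i^{-}$ is zero, because otherwise subtracting $\min\{\hat\varepsilon_i^{+},\hat\varepsilon_i^{-}\}>0$ from both leaves the regression equation and all sign and shape constraints intact while strictly decreasing the objective (which is strictly increasing in each strictly positive component) — contradicting optimality. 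Hence $\hat\varepsilon_i^{+}=e_i^{+}$ and $\hat\varepsilon_i^{-}=e_i^{-}$.

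Next, for $\delta\in\real$ consider the shifted point $(\hat\alpha_i+\delta,\hat\bbeta_i,(e_i-\delta)^{+},(e_i-\delta)^{-})$. Adding the same $\delta$ to every intercept preserves the Afriat inequalities and the monotonicity constraints, and the regression equations hold with this residual decomposition, so the point is feasible for every $\delta$, with objective value
\[
g(\delta) := \tilde\tau\sum_{i=1}^{n}\bigl((e_i-\delta)^{+}\bigr)^2 + (1-\tilde\tau)\sum_{i=1}^{n}\bigl((\delta-e_i)^{+}\bigr)^2 .
\]
Since $\delta=0$ reproduces the optimal solution, $g(0)$ is the optimal objective value and $g(\delta)\ge g(0)$ for all $\delta$; that is, $\delta=0$ is a global minimizer of $g$.

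Then I would differentiate. The map $t\mapsto(t^{+})^2$ is $C^1$ with derivative $2t^{+}$, so $g$ is continuously differentiable with
\[
g'(\delta) = -2\tilde\tau\sum_{i=1}^{n}(e_i-\delta)^{+} + 2(1-\tilde\tau)\sum_{i=1}^{n}(\delta-e_i)^{+}.
\]
Imposing the stationarity condition $g'(0)=0$ and using $e_i^{+}=\hat\varepsilon_i^{+}$, $e_i^{-}=\hat\varepsilon_i^{-}$ gives $\tilde\tau\sum_i\hat\varepsilon_i^{+}=(1-\tilde\tau)\sum_i\hat\varepsilon_i^{-}$; rearranging yields the claimed formula, under the harmless proviso that the residuals are not all zero (otherwise the fit is exact and both sides are trivially defined).

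I do not anticipate a serious obstacle. The two points needing care are (i) the complementary decomposition $\hat\varepsilon_i^{\pm}=e_i^{\pm}$ at the optimum, which is the short exchange argument above, and (ii) differentiability of $g$ at the kinks $\delta=e_i$, which is immediate from $(t^{+})^2\in C^1$. An alternative route would write out the KKT conditions of the quadratic program \eqref{eq:cer} directly, but the one-parameter perturbation keeps the bookkeeping minimal and isolates exactly the intercept-shift invariance that drives the result.
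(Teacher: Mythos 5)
Your proof is correct. The paper does not reprove Theorem 2 but defers to Kuosmanen et al.\ (2020b, Proposition 3), whose argument is essentially the same first-order condition obtained by shifting all intercepts by a common constant $\delta$; your two supporting steps --- the exchange argument giving $\hat\varepsilon_i^{\pm}=e_i^{\pm}$ at the optimum, and the $C^1$ differentiability of $t\mapsto(t^{+})^2$ at the kinks --- are both handled correctly, so the argument is complete.
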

\begin{proof}
See \citename{Kuosmanen2020b} (\citeyear*{Kuosmanen2020b}; Proposition 3).
\end{proof}

Compared with the quantile estimation, indirect expectile estimation can be more efficient due to the fact that asymmetric least squares makes use of the distance to observations instead of the discrete count of observations below or above the curve. In another context, the estimated expectile function has been suggested to be more sensitive to outliers than the estimated quantile function (\citename{Waltrup2015}, \citeyear*{Waltrup2015}; \citename{Daouia2020}, \citeyear*{Daouia2020}), which, however, is not supported by our Monte Carlo simulations (see Appendix~\ref{app:experiments}). 

Beyond those discrepancies between the direct and indirect estimation of quantiles, both approaches can be connected by a unique one-to-one mapping from quantile $\tau$ to expectile $\tilde{\tau}$. There exists a bijective function such that $m_{\tilde \tau} = q_{\tau}$, where expectile $\tilde{\tau}$ is defined as below (\citename{DeRossi2009}, \citeyear*{DeRossi2009})
\begin{equation*}
	\tilde{\tau} = \dfrac{\int_{-\infty}^{q_\tau}(y-q_\tau)dF(y)}{\int_{-\infty}^{q_\tau}(y-q_\tau)dF(y) - \int_{q_\tau}^{\infty}(y-q_\tau)dF(y)}, 
\end{equation*}
where $\int_{-\infty}^{q_\tau}(y-q_\tau)dF(y)$ and $\int_{q_\tau}^{\infty}(y-q_\tau)dF(y)$ are the lower and upper partial moments, respectively, and $F(y)$ is the cumulative distribution function of $y$. Therefore, we can always convert the expectile based quantile estimates $\hat{m}_{\tilde{\tau}}$ from the quantile estimates $\hat{q}_\tau$, and vice versa.

In practice, a simple procedure suggested by \citeasnoun{Efron1991} is first to estimate the expectile and then indirectly determine the corresponding quantile by counting the number of negative residuals $\varepsilon_i^-$ that take strictly positive values. More recently, \citeasnoun{Waltrup2015} propose a similar but more efficient approach by using the linear interpolation method. Note that all alternative transformation procedures rely on the quantile property stated in Theorem \ref{the:the1} (parts i and ii). 

However, the effectiveness of indirect estimation of quantiles through expectile regression has not been tested in the present context of CER. Moreover, as an alternative to the direct quantile regression, we really do not know about the finite sample performance of CER. In Section~\ref{sec:mc}, we will systematically compare the performance of these two approaches through Monte Carlo simulations.

Furthermore, while the estimated quantile function, $\hat{Q}_{y_i}$, is always unique in the CER estimation, the feasible set of problem \eqref{eq:cer} could be unbounded. That is, there may exist multiple combinations of shadow prices $\hat{\beta}_{ij}$) leading to the same optimal value of the objective function (\citename{Dai2021a}, \citeyear*{Dai2021a}). The non-unique estimates in both CQR and CER may further cause a longstanding problem of quantile crossing in quantile estimation (\citename{Dai2022}, \citeyear*{Dai2022}). Addressing the non-uniqueness estimation in both CQR and CER estimators is left as an interesting avenue for further research. 

\subsection{Order-\texorpdfstring{$\alpha$}{}}\label{sec:oa}

To ameliorate the sensitivity of DEA to outliers, \citeasnoun{Cazals2002a} propose the first robust nonparametric frontier estimator, order-$m$, where the estimated frontier is viewed as a ``trimmed” frontier. Subsequently, \citeasnoun{Aragon2005} develop a similar order-$\alpha$ estimator based on the conditional quantiles of an appropriate distribution. 

Consider the following standard production possibility set
\begin{equation*}
    \Ps = \{(x,y) \in \real_+^{d+1}\mid x\hbox{ can produce } y\}
\end{equation*}
where we assume that the set $\Ps$ is the support of the joint distribution of $(X, Y)$. For the partial frontier estimation, the focus is on the interior set 
\begin{equation*}
    \Ps^* = \{(x,y) \in \Ps \mid F_X(x) > 0\}
\end{equation*}
where $\Ps^* \subseteq \Ps$, $F$ is the joint distribution function of $(x, y)$; $F(x,y) = P(\{(X, Y): X \le x, Y \le y\})$, and $F_X(x)$ is the associated marginal distribution function of $x$. Given the level of inputs $x$, following \citeasnoun{Aragon2005} we can define the $\tau$ frontier as\footnote{
To be consistent with quantile estimators, the original notation of quantiles, $\alpha$, in the order-$\alpha$ estimator is replaced by the same-meaning notation $\tau$.
}
\begin{equation}
q_\tau(x):= F^{-1}(\tau\,|\,x)=\text{inf}\{y\ge 0 \mid F(y\,|\,x)\ge \tau\}
\label{eq:eq3}
\end{equation}
where $F(y\,|\,x) = F(x,y) \slash F_X(x)$ and it is the conditional distribution function of $Y$ given $X \le x$. Eq. \eqref{eq:eq3} indicates that the $\tau$ frontier falls below 100(1-$\tau$)\% of observations that use a smaller level of input than $x$. If the distribution function $F(y\,|\,x)$ is strictly increasing, then $q_\tau(x)= F^{-1}(\tau\,|\,x)$, where $F^{-1}(\tau\,|\,x)$ is the inverse of $F(y\,|\,x)$ similar to $F_\varepsilon^{-1}(\tau)$ in Eq.~\eqref{eq:qreg}. To estimate the partial frontier $q_\tau(x)$, one can obtain an empirical estimate by inverting the conditional empirical distribution function $\hat{F}(y\,|\,x)$
\begin{equation}
\hat{q}_{\tau, n}(x):= F^{-1}(\tau\,|\,x)=\text{inf}\{y \mid \hat{F}(y\,|\,x)\ge \tau\}
\label{eq:eq4}
\end{equation}

One interesting property of the order-$\alpha$ estimator is that as $\tau \rightarrow 1$, the function $q_\tau$ converges to the free disposal hull (FDH) full frontier, $q_1$, which is a monotone nondecreasing function. However, the function $q_\tau$ per se does not satisfy monotonicity unless we impose other assumptions (see Proposition 2.5 in \citename{Aragon2005}, \citeyear*{Aragon2005}). Further, FDH and order-$\alpha$ are step functions and hence cannot be used for shadow pricing.

In the subsequent literature, the order-$\alpha$ approach has been extended to incorporate the multivariate setting (\citename{Daouia2007}, \citeyear*{Daouia2007}), hyperbolic orientation (\citename{Wheelock2008}, \citeyear*{Wheelock2008}), and directional measures (\citename{Simar2012a}, \citeyear*{Simar2012a}). Meanwhile, order-$\alpha$ and its extensions have been widely applied in the context of productivity and efficiency analysis (see, e.g., \citename{kruger2012}, \citeyear*{kruger2012}; \citename{Wheelock2013}; \citeyear*{Wheelock2013}; \citename{carvalho2014}, \citeyear*{carvalho2014}). 

\subsection{Convexified order-\texorpdfstring{$\alpha$}{}}\label{sec:coa}

The order-$\alpha$ approach in \citeasnoun{Aragon2005} and \citeasnoun{Daouia2007} only assumes that the production function $f$ is monotone increasing in $x$. Since the additional concavity assumption of $f$ is commonly imposed in the production economic literature, we here impose convexity on order-$\alpha$ in line with the convexified order-$m$ approach (\citename{Daraio2007}, \citeyear*{Daraio2007}).\footnote{
\citeasnoun{Ferreira2020} propose another convexified version of order-$\alpha$, which also assumes virtual weight restrictions and non-variable returns to scale.
}

Our proposed convexified order-$\alpha$ estimator consists of a two-step estimation procedure: 1) we utilize order-$\alpha$ to estimate the order-$\alpha$ production frontier; 2) we apply the standard DEA-VRS (variable returns to scale) estimator to the estimated output on the order-$\alpha$ production frontier ($\hat{y}_{i}^{\tau, \text{order-}\alpha}$) and obtain the convexified order-$\alpha$ production frontier. Formally, the second step can be stated as 
\begin{alignat}{2}
\hat{\theta}_i^{\tau} = \underset{\theta, \lambda}{\text{max}}\bigg\{\theta \biggm| \theta \hat{y}_{i}^{\tau, \text{order-}\alpha} \le \sum_{j=1}^{n}\lambda_j \hat{y}_{j}^{\tau, \text{order-}\alpha};\, x_i \ge \sum_{j=1}^{n}\lambda_j x_j; &{}& \label{eq:eq5} \\ 
\sum_{j=1}^n \lambda_j =1;\, \lambda_j \ge 0 \,\, \forall j=1,\ldots,n\bigg\}  \notag
\end{alignat}
where $\hat{\theta}_i^{\tau}$ is the efficiency score and $\lambda_j$ is the intensity variable. Multiplying $\hat{y}_{i}^{\tau, \text{order-}\alpha}$ by $\hat{\theta}_i^{\tau}$ yields the estimated output $\hat{y}_{i}^{\tau}$ on the convexified order-$\alpha$ frontier. Note that in contrast to the original order-$\alpha$ estimator, the convexified order-$\alpha$ estimator satisfies the monotonicity and concavity of the production function by construction, which can be useful for shadow pricing non-market inputs and/or outputs. 

%----------------%
%
%----------------%

\section{Isotonic convex quantile regression}\label{sec:icqr}

Since the current methodological toolbox does not include a nonconvex quantile regression method in parallel with the order-$\alpha$ estimator, to enable performance comparison we need to extend the quantile function approach by relaxing the convexity assumption and relying on the monotonicity assumption only. 

Isotonic regression has a long history in statistics (see, e.g., \citename{Brunk1955}, \citeyear*{Brunk1955}; \citename{ayer1955}, \citeyear*{ayer1955}). It can be easily extended to a setup where the predictors can take values in any space with a partial order. Isotonic quantile regression (\citename{casady1976}, \citeyear*{casady1976}) is also well established to estimate the monotonic quantile function. In this section, we develop an alternative formulation of isotonic quantile regression, which is computationally convenient and more closely related to the convex CQR. We further utilize the new formulation to relax the convexity assumption of CER. 

Let $\chi:=\{\bx_i \in \real_+^{d}\}$ be a non-empty set with $d$ distinct elements in a metric space with a partial order, which is reflexive ($\bx_i \preccurlyeq \bx_i$, $\forall \bx_i \in \chi$), transitive (for $\bx_i, \bx_j, \bx_k \in \chi, \bx_i \preccurlyeq \bx_j$ and $\bx_j \preccurlyeq \bx_k$ imply $\bx_i \preccurlyeq \bx_k$), and antisymmetric (for $\bx_i, \bx_j \in \chi, \bx_i \preccurlyeq \bx_j$ and $\bx_j \preccurlyeq \bx_i$ imply $\bx_i = \bx_j$). Consider the production function $f$ is isotonic with respect to a partial ordering on $\chi$: if for any pair $\bx_i$, $\bx_h$ $\in \chi$, $\bx_i \preccurlyeq \bx_h$, then the fitted production function $f^*(\bx_i) \in \Mc$, where
\begin{equation*}
    \Mc:= \{ f \in \real^d: f(\bx_i) \le f(\bx_h)\}.
\end{equation*}

If the partial ordering is defined as the dominance relation (i.e., $\bx_i \preccurlyeq \bx_j$ if $\bx_i \le \bx_j$), then the non-decreasing production function satisfies monotonicity (i.e., free disposability of inputs); that is, isotonicity is equivalent to monotonicity. However, the partial ordering could also be defined by other criteria (e.g., revealed preference information), where isotonicity is not exactly the same as monotonicity. In this paper, we follow the general isotonic notation given above but note that monotonicity is an important special case of isotonicity.

For a given set of data $\{(\bx_i, y_i)\}_{i=1}^n$ and quantile $\tau$, convex quantile regression over the class $\Mc$ is
\begin{equation}
\hat{Q}(\tau \, | \, \bx_i) \in \operatorname*{arg\,min}_{f_\tau \in \Mc}\sum^{n}_{i=1}(y_i - f_\tau(\bx_i))(\tau - \textbf{1}\{y_i \le f_\tau(\bx_i)\})
\label{eq:iqr}
\end{equation}
where the isotonic CQR problem \eqref{eq:iqr} selects the best-fit isotonic quantile function from the class $\Mc$. In practice, however, it is impossible to directly search for the optimal solution from this infinite problem. Following \citeasnoun{Barlow1972}, we can harmlessly replace the class of isotonic quantile functions $\Ms$ by the following step functions $\Gc$
\begin{equation*}
	\Gc = \big \{Q: \real_+^d \rightarrow \real_+ \, | \, Q(\tau \, | \, \bx) = \sum\limits_{i=1}^n\delta_i \Zc (\tau \, | \, \bx_i) \big\}
\end{equation*}
where $\Zc(\tau \, | \, \bx_i)$ is an indicator function at a given quantile $\tau$ and is formulated as
\[
 \Zc(\tau \, | \, \bx_i) = 
  \begin{cases} 
   1       & \text{if } \bx_i \preccurlyeq \bx, \\
   0       & \text{otherwise}.
  \end{cases}
\]
and $\delta_i>0$ is the parameter to characterize the step height. Note that the step functions $\Gc$ are a subset of the isotonic functions $\Mc$ (i.e., $\Gc \subset \Ms$), which helps to transform the infinite problem \eqref{eq:iqr} to a finite problem (see, e.g., \citename{Barlow1972}, \citeyear*{Barlow1972}; \citename{Keshvari2013}, \citeyear*{Keshvari2013}).

Hence, problem \eqref{eq:iqr} can be solved through the following finite dimensional isotonic CQR problem\footnote{
Note that as an extension of CQR, isotonic CQR remains in the class of convex regression methods, even though the resulting step function is typically neither convex nor concave. Note also that the estimated step function envelops a union of $n$ convex sets.
}
\begin{alignat}{2}
 \underset{\mathbf{\alpha},\mathbf{\beta },{\mathbf{\varepsilon }}^{\text{+}},{\mathbf{\varepsilon }}^{-}}{\mathop{\min}}&\,\tau \sum\limits_{i=1}^{n}{\varepsilon _{i}^{+}}+(1-\tau )\sum\limits_{i=1}^{n}{\varepsilon _{i}^{-}}  &{}& \label{eq:icqr} \\ 
\mbox{\textit{s.t.}}\quad
& y_i=\mathbf{\alpha}_i+\bbeta _i^{'}\bx_i+\varepsilon_i^{+}-\varepsilon_i^{-} &\quad& \forall i \notag \\
& p_{ih}\Big(\mathbf{\alpha}_i+\bbeta _{i}^{'}{{\bx}_{i}}\Big)\le p_{ih}\Big(\mathbf{\alpha}_h+\bbeta _h^{'}\bx_i\Big)  &{}& \forall i,h \notag \\
& \bbeta_i\ge \bzero &{}& \forall i \notag \\
& \varepsilon _i^{+}\ge 0,\ \varepsilon_i^{-} \ge 0 &{}& \forall i \notag
\end{alignat}
where isotonic CQR requires an additional preprocessing step to determine the value of $p_{ih}$ that represents the partial order between observation $i$ and $h$. If $p_{ih}=0$, the concavity constraint on the production function $f$ is relaxed in isotonic CQR; otherwise, the isotonic CQR estimator is reduced to the original CQR estimator \eqref{eq:cqr}. Therefore, the isotonic CQR estimator provides an alternative way to model the class of nonparametric isotonic quantile regressions, which is computationally convenient and provides a clear link to CQR. 

To determine the value of $p_{ih}$ in \eqref{eq:icqr}, we need to define a binary matrix $\BP=\big[p_{ih} \big]_{n \times n}$
\[
 p_{ih} = 
  \begin{cases} 
   1       & \text{if } \bx_i \preccurlyeq \bx_h, \\
   0       & \text{otherwise}.
  \end{cases}
\]
The matrix $\BP$ converts the partial order relations between two observations into binary values and the value of $p_{ih}$ is determined by the standard dominance relations, which can be simply detected by an enumeration procedure suggested by \citeasnoun{Keshvari2013}. Further, the matrix $\BP$ can be interpreted as a preference matrix if the partial ordering denotes the preference of a decision maker. 

Similarly, we can replace the objective function of \eqref{eq:icqr} with the following quadratic objective function to guarantee the unique expectile estimation and derive the isotonic CER estimator
\vspace{-1em}
\begin{alignat}{2}
 \underset{\alpha, \bbeta, \varepsilon^+, \varepsilon^-}{\mathop{\min}}&\,\tilde{\tau} \sum\limits_{i=1}^n(\varepsilon _i^{+})^2+(1-\tilde{\tau} )\sum\limits_{i=1}^n(\varepsilon_i^{-})^2  &{}& \label{eq:icer} \\ 
\mbox{\textit{s.t.}}\quad
& y_i=\mathbf{\alpha}_i+\bbeta _i^{'}\bx_i+\varepsilon_i^{+}-\varepsilon_i^{-} &\quad& \forall i \notag \\
& p_{ih}\Big(\mathbf{\alpha}_i+\bbeta _{i}^{'}{{\bx}_{i}}\Big)\le p_{ih}\Big(\mathbf{\alpha}_h+\bbeta _h^{'}\bx_i\Big)  &{}& \forall i,h \notag \\
& \bbeta_i\ge \bzero &{}& \forall i \notag \\
& \varepsilon _i^{+}\ge 0,\ \varepsilon_i^{-} \ge 0 &{}& \forall i \notag
\end{alignat}

For both CQR and CER, the proposed isotonic CQR and isotonic CER can be interpreted as their nonconvex counterparts. Notably, both isotonic CQR and isotonic CER inherit the quantile and expectile properties from their parents, as Theorem \ref{the:the3} demonstrates.
\begin{theorem}
The quantile and expectile properties, i.e., Theorem \ref{the:the1} (part i) and Theorem \ref{the:the2}, are retained in isotonic CQR and isotonic CER, respectively.  
\label{the:the3}
\end{theorem}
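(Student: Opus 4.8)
The plan is to show that Theorem~\ref{the:the3} is essentially a corollary of the proofs of Theorem~\ref{the:the1}(i) and Theorem~\ref{the:the2}, once one isolates the single structural feature those proofs use. Writing $r_i := y_i - (\alpha_i + \bbeta_i'\bx_i)$ and eliminating $\varepsilon_i^+,\varepsilon_i^-$ through their optimal values $\varepsilon_i^+ = \max\{r_i,0\}$, $\varepsilon_i^- = \max\{-r_i,0\}$ (valid for both the $L_1$ and the asymmetric-$L_2$ objective, since the cross term $\varepsilon_i^+\varepsilon_i^-$ is penalised to zero exactly as in \eqref{eq:iqr}), isotonic CQR \eqref{eq:icqr} reduces to $\min_{f\in\Mc}\sum_i \rho_\tau(y_i - f(\bx_i))$ and isotonic CER \eqref{eq:icer} to $\min_{f\in\Mc}\sum_i \phi_{\tilde\tau}(y_i - f(\bx_i))$, where $\rho_\tau$ is the check function and $\phi_{\tilde\tau}(u) = \tilde\tau(\max\{u,0\})^2 + (1-\tilde\tau)(\max\{-u,0\})^2$. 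These are precisely the CQR and CER objectives; only the feasible class has been enlarged from ``monotone and concave'' to ``isotonic''.

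Next I would pin down the one property of the feasible set that the original perturbation argument exploits: invariance under addition of a constant to the fitted function. If $(\alpha_i,\bbeta_i,\varepsilon_i^+,\varepsilon_i^-)_{i=1}^n$ is feasible for \eqref{eq:icqr} (resp.\ \eqref{eq:icer}), then for every $c\in\real$ the point obtained by replacing $\alpha_i$ with $\alpha_i + c$, keeping $\bbeta_i$, and setting $\varepsilon_i^+ = \max\{r_i - c,0\}$, $\varepsilon_i^- = \max\{c - r_i,0\}$ is again feasible: the monotonicity constraint $\bbeta_i\ge\bzero$ is untouched, the sign constraints hold by construction, and the weighted Afriat inequality $p_{ih}(\alpha_i+\bbeta_i'\bx_i) \le p_{ih}(\alpha_h+\bbeta_h'\bx_i)$ survives because $p_{ih}\in\{0,1\}$, so adding $p_{ih}c$ to both sides changes nothing. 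Equivalently, $\Mc$ is closed under additive constants. This is the one delicate point: with isotonic constraints one may \emph{not} perturb a single fitted value, so the argument must use the \emph{global} shift $f^* \mapsto f^* + c$, and it is exactly this closure property — and nothing stronger — that makes the global shift admissible.

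The remainder is the standard first-order argument. For isotonic CQR, let $f^*$ be optimal with residuals $r_i$ and put $g(c) := \sum_i \rho_\tau(r_i - c)$; since $f^*+c$ is feasible for all $c$ and $f^*$ is optimal, the convex piecewise-linear function $g$ is minimised at $c=0$, so $g'_-(0)\le 0\le g'_+(0)$. Computing the one-sided derivatives (with ties $r_i = 0$ absorbed into the appropriate count on each side) gives $g'_+(0) = -\tau n_\tau^+ + (1-\tau)(n - n_\tau^+)\ge 0$ and $g'_-(0) = -\tau(n - n_\tau^-) + (1-\tau)n_\tau^-\le 0$, where $n_\tau^+ = \#\{i: r_i>0\} = \#\{i:\hat\varepsilon_i^+>0\}$ and $n_\tau^- = \#\{i: r_i<0\} = \#\{i:\hat\varepsilon_i^->0\}$; rearranging yields $n_\tau^+/n\le 1-\tau$ and $n_\tau^-/n\le\tau$, i.e.\ Theorem~\ref{the:the1}(i). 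For isotonic CER, put $\Phi(c):=\sum_i \phi_{\tilde\tau}(r_i - c)$; this is convex and $C^1$, minimised at the interior point $c=0$, so $\Phi'(0)=0$, which on differentiation reads $-2\tilde\tau\sum_i\hat\varepsilon_i^+ + 2(1-\tilde\tau)\sum_i\hat\varepsilon_i^- = 0$, and solving for $\tilde\tau$ gives the identity of Theorem~\ref{the:the2}. I would relegate the derivative computations to Appendix~\ref{app:proof3} and remark in the text that parts (ii)--(iii) of Theorem~\ref{the:the1} are deliberately not asserted, since they concern consistency and non-uniqueness rather than this finite-sample counting identity. The main obstacle is thus not any estimate but the bookkeeping: checking that the weighted Afriat constraints survive the global shift (the $p_{ih}\in\{0,1\}$ remark) and handling the non-smoothness of $\rho_\tau$ at zero so that tied residuals are counted consistently; everything else is inherited verbatim from the CQR and CER proofs cited after Theorems~\ref{the:the1} and \ref{the:the2}.
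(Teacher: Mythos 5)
Your proposal is correct and follows essentially the same route as the paper's Appendix~A: rewrite isotonic CQR/CER in check-function and asymmetric-squared-loss form, and obtain the quantile and expectile identities from the same intercept-perturbation (global shift $\alpha_i \mapsto \alpha_i + c$) argument that underlies the cited CQR and CER proofs. If anything, your write-up is the more complete one — by observing that the weighted Afriat constraints $p_{ih}(\alpha_i+\bbeta_i'\bx_i)\le p_{ih}(\alpha_h+\bbeta_h'\bx_i)$ are invariant under a common additive constant for \emph{any} binary matrix $\BP$, you cover a mixed $\BP$ uniformly, whereas the paper's proof only spells out the two extreme cases $p_{ih}\equiv 1$ and $p_{ih}\equiv 0$ and defers the rest to the references.
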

\begin{proof}
See Appendix \ref{app:proof3}.
\end{proof}

The nonconvex nonparametric quantile regression estimators developed in this section enable us to compare the finite sample performance of the quantile and expectile regression approaches with the partial frontier approach. In the convex case, the nonparametric quantile regression estimator can be compared with the developed convexified partial frontier estimator.

%----------------%
%
%----------------%

\section{Empirical illustration of quantile functions}\label{sec:appli}

To gain an intuition of what the alternative quantile functions and partial frontiers look like, we first illustrate those estimators with a real cross-sectional dataset used in \citeasnoun{Kuosmanen2020b} and \citeasnoun{Dai2022}. It covers plant-level data on 130 U.S. electric power plants in 2014. A very similar dataset has been repeatedly used in the empirical demonstration of newly developed frontier estimators (see, e.g., \citename{Greene1990}, \citeyear*{Greene1990}; \citename{Gijbels1999}, \citeyear*{Gijbels1999}; \citename{Martins-Filho2008}, \citeyear*{Martins-Filho2008}).

Following \citeasnoun{Gijbels1999} and \citeasnoun{Martins-Filho2008}, we consider a univariate case where the output $y = \ln(Q)$ with $Q$ being the net generation of each power plant and the input $x = \ln(C)$ with $C$ being the sum of fixed cost and variable cost of electricity production. See \citeasnoun{Kuosmanen2020b} for a detailed discussion of the data sources and descriptive statistics.

Since there exists a one-to-one mapping between quantiles and expectiles, we estimate a number of expectiles (i.e., $\tilde{\tau}$ = 0.001, 0.002, $\ldots$, 0.999) and then determine the corresponding quantile $\tau$ by counting the number of negative residuals $\varepsilon_i$ that take strictly positive values (\citename{Efron1991}, \citeyear*{Efron1991}). Figure~\ref{fig:fig1} depicts the estimated monotonic quantile and expectile functions by order-$\alpha$, isotonic CQR, and isotonic CER at $\tau = 0.9$, $0.7$, $0.5$, and $0.3$, respectively.
\begin{figure}[H]
	\centering
	\begin{subfigure}[b]{0.495\textwidth}
		\centering
		\includegraphics[width=1\textwidth]{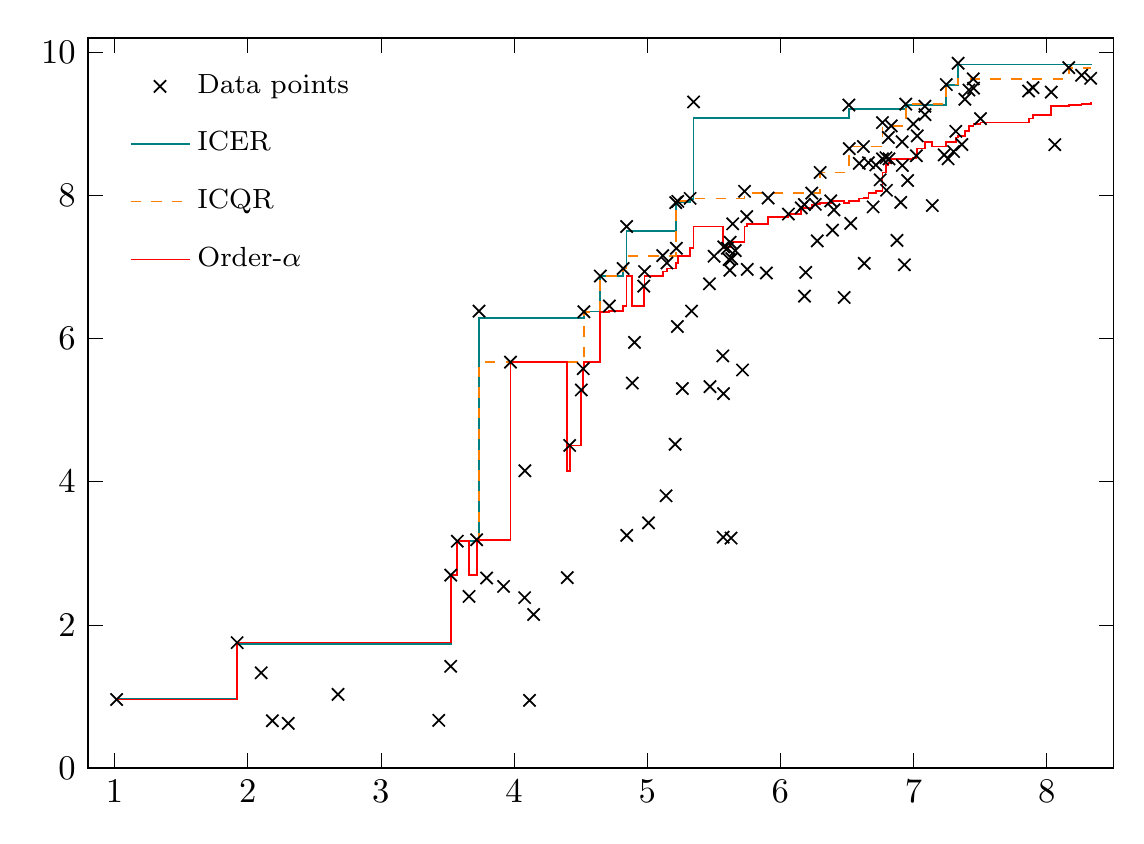} 
		\caption[]%
		{{\small $\tau=0.9$}}    
		\label{fig1:a}
	\end{subfigure}
	%\hfill
	\begin{subfigure}[b]{0.495\textwidth}  
		\centering 
		\includegraphics[width=1\textwidth]{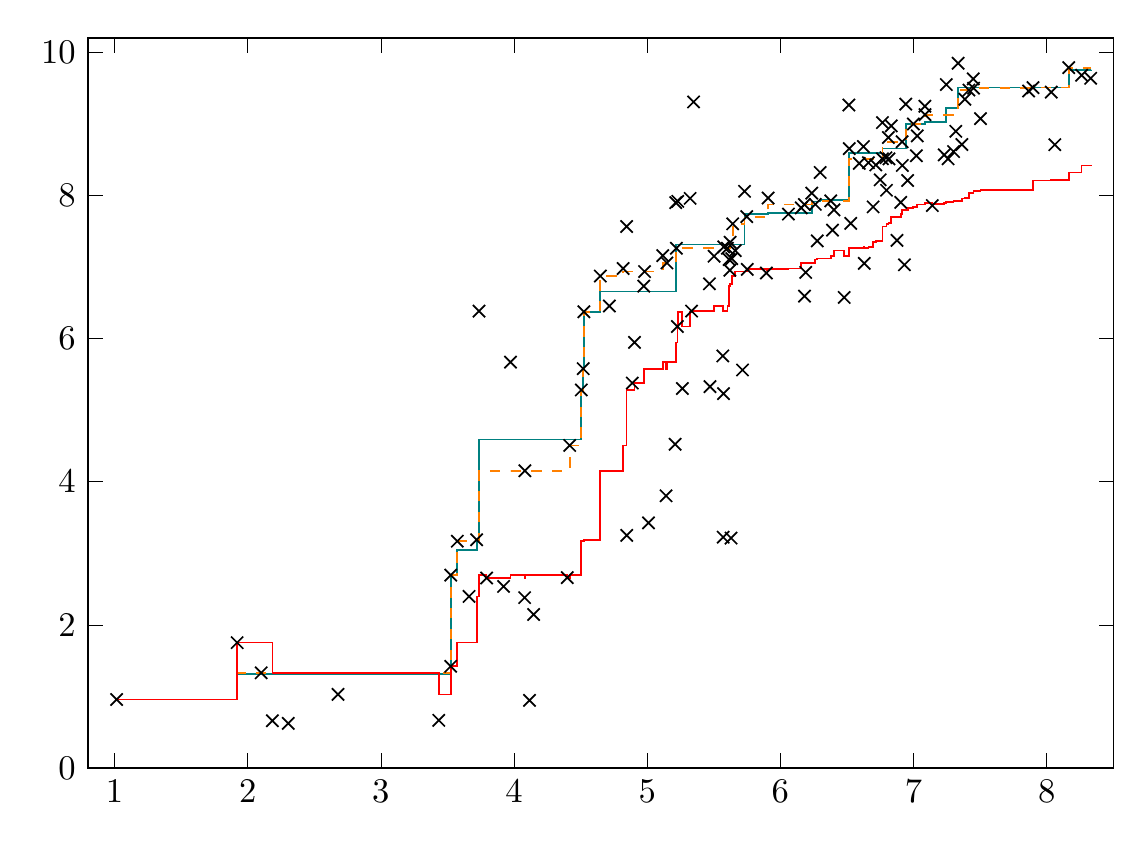}
		\caption[]%
		{{\small $\tau=0.7$}}    
		\label{fig1:b}
	\end{subfigure}
	%\vskip\baselineskip
	\begin{subfigure}[b]{0.495\textwidth}   
		\centering 
		\includegraphics[width=1\textwidth]{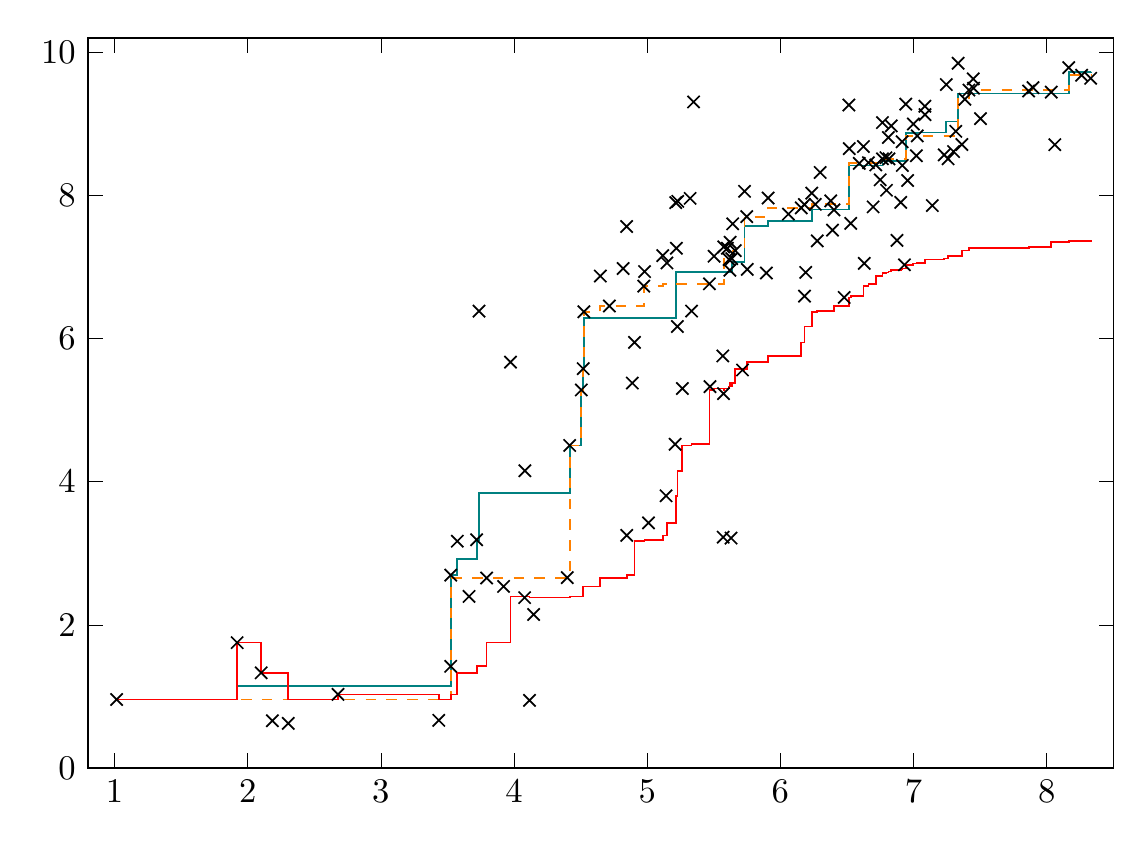} 
		\caption[]%
		{{\small $\tau=0.5$}}    
		\label{fig1:c}
	\end{subfigure}
	%\hfill
	\begin{subfigure}[b]{0.495\textwidth}   
		\centering 
		\includegraphics[width=1\textwidth]{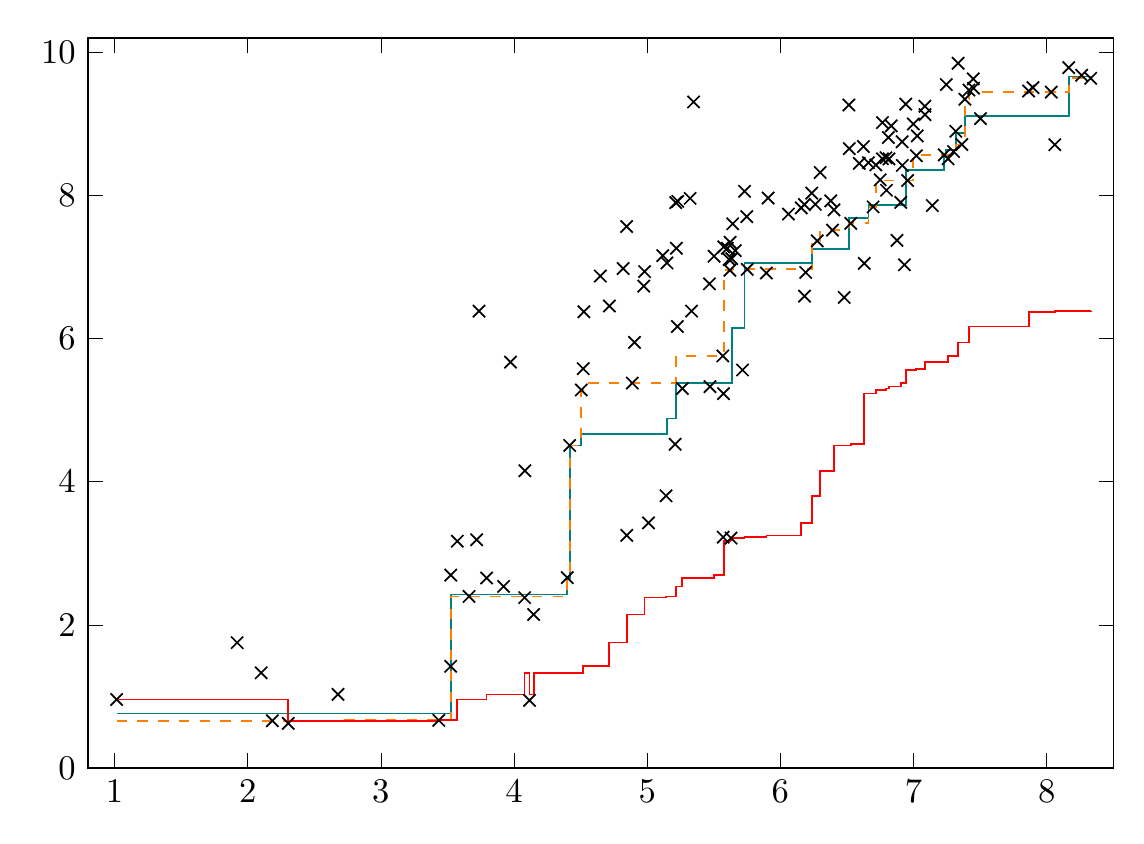} 
		\caption[]%
		{{\small $\tau=0.3$}}    
		\label{fig1:d}
	\end{subfigure}
	\caption[]%
	{\small Illustration of estimated order-$\alpha$, isotonic CQR, and isotonic CER functions. X-axis: ln(C), Y-axis: ln(Q).} 
	\label{fig:fig1}
\end{figure}

The estimated isotonic CQR and isotonic CER functions are step functions enveloping exactly $100\tau\%$ of the observations for each quantile $\tau$. In contrast, the estimated order-$\alpha$ frontier does not necessarily envelope $100\tau\%$ of the observations, but rather less than $100\tau\%$ of the observations especially when the quantile $\tau$ gets smaller such as $\tau=0.3$ (see Figure~\ref{fig1:d}). This observation suggests that the order-$\alpha$ estimator cannot guarantee the quantile property, especially for the low quantile estimation. This is not surprising because order-$\alpha$ is geared towards estimating high quantiles but deteriorates when $\tau$ decreases. Further, the order-$\alpha$ estimator does not even satisfy monotonicity, which is its only assumed shape constraint. The violation of monotonicity occurs in all cases---the estimated order-$\alpha$ frontier (red line) is not strictly increasing but can also decrease, as shown in Figure~\ref{fig:fig1} (see also Figures 2 and 3 in \citename{Daouia2007}, \citeyear*{Daouia2007}). 

\begin{figure}[H]
	\centering
	\begin{subfigure}[b]{0.495\textwidth}
		\centering
		\includegraphics[width=1\textwidth]{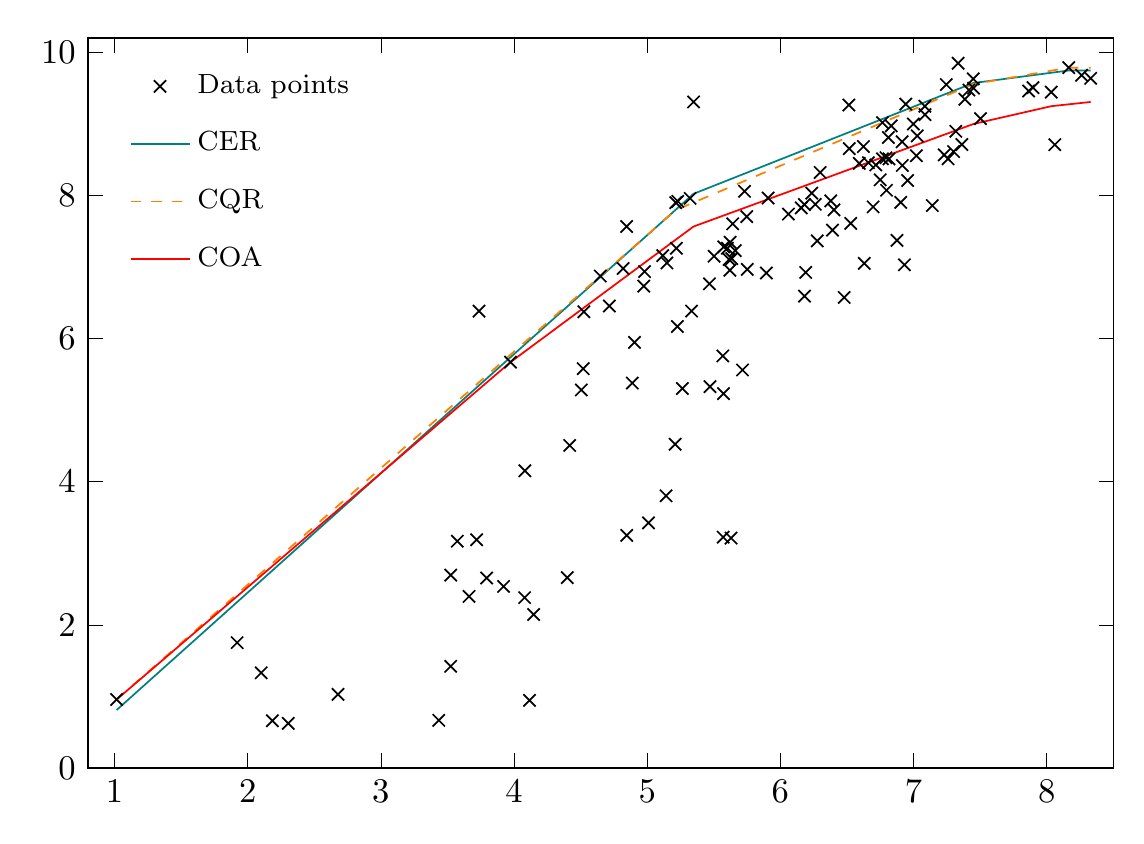} 
		\caption[Network2]%
		{{\small $\tau=0.9$}}    
		\label{fig2:a}
	\end{subfigure}
	%\hfill
	\begin{subfigure}[b]{0.495\textwidth}  
		\centering 
		\includegraphics[width=1\textwidth]{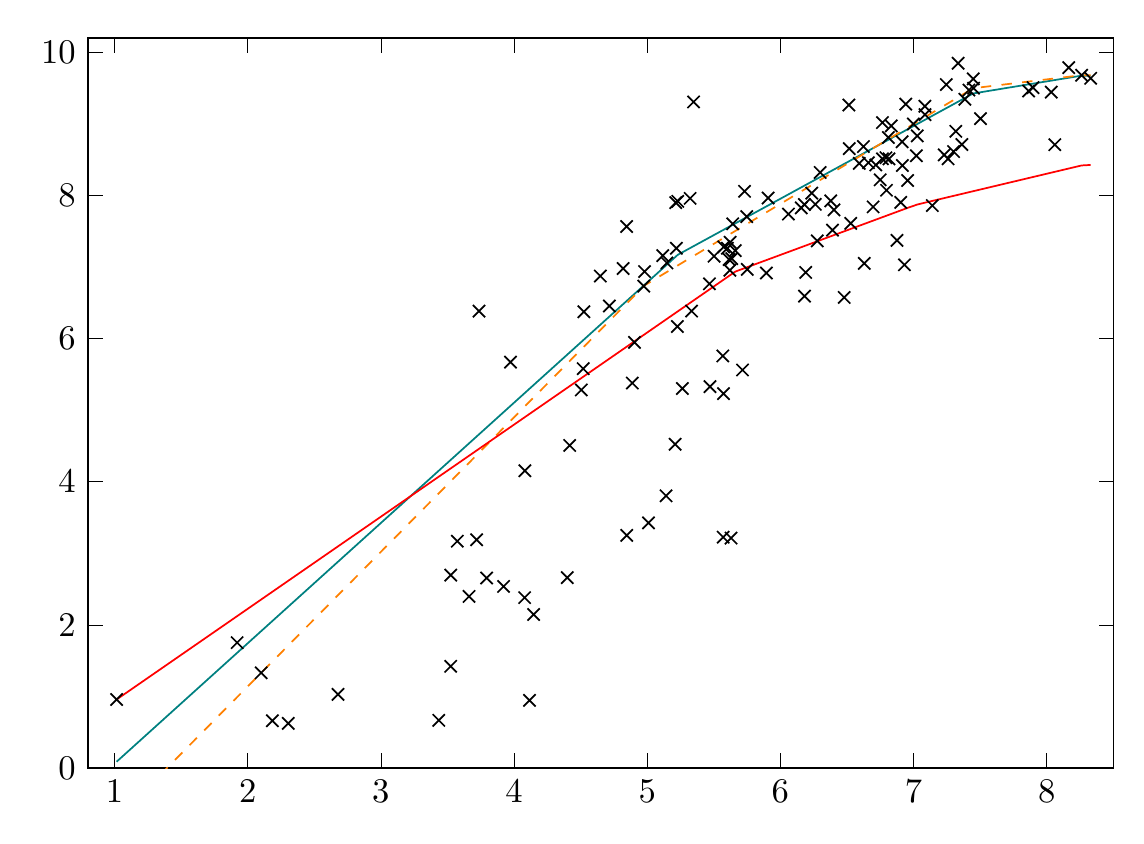} 
		\caption[]%
		{{\small $\tau=0.7$}}    
		\label{fig2:b}
	\end{subfigure}
	%\vskip\baselineskip
	\begin{subfigure}[b]{0.495\textwidth}   
		\centering 
		\includegraphics[width=1\textwidth]{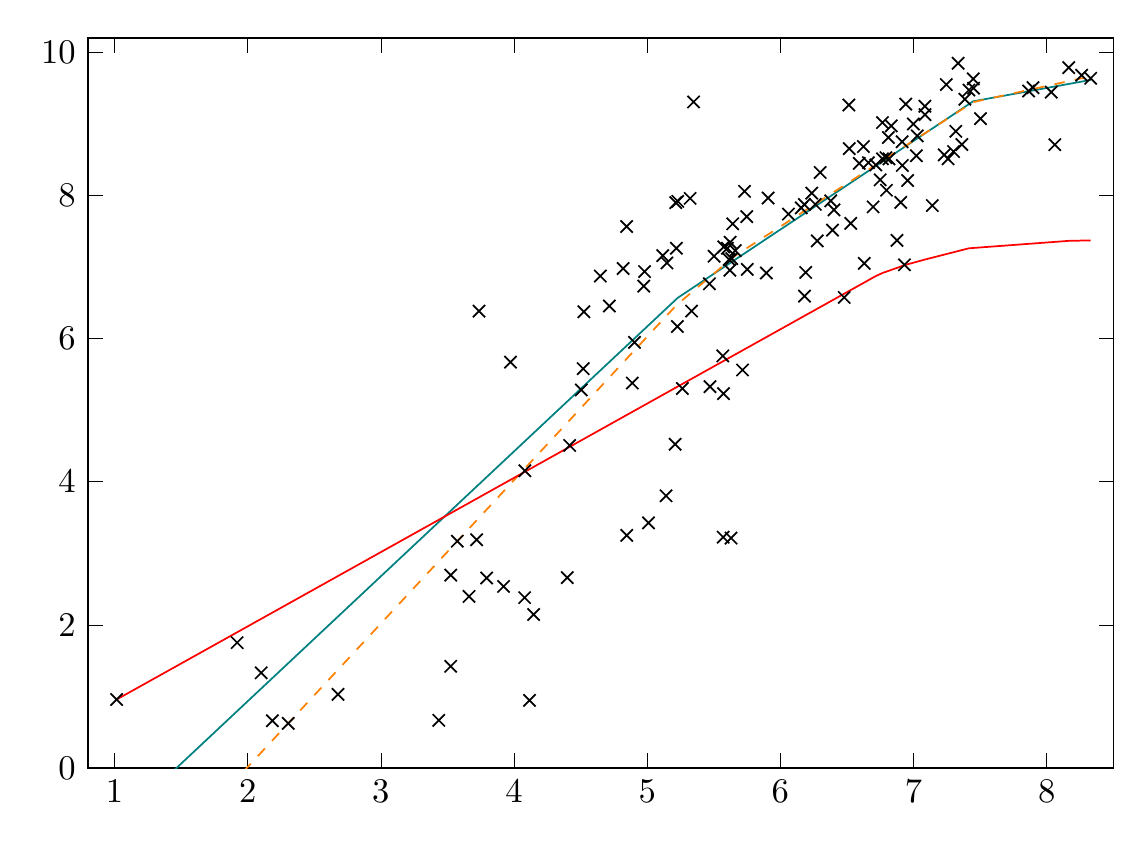} 
		\caption[]%
		{{\small $\tau=0.5$}}    
		\label{fig2:c}
	\end{subfigure}
	%\hfill
	\begin{subfigure}[b]{0.495\textwidth}   
		\centering 
		\includegraphics[width=1\textwidth]{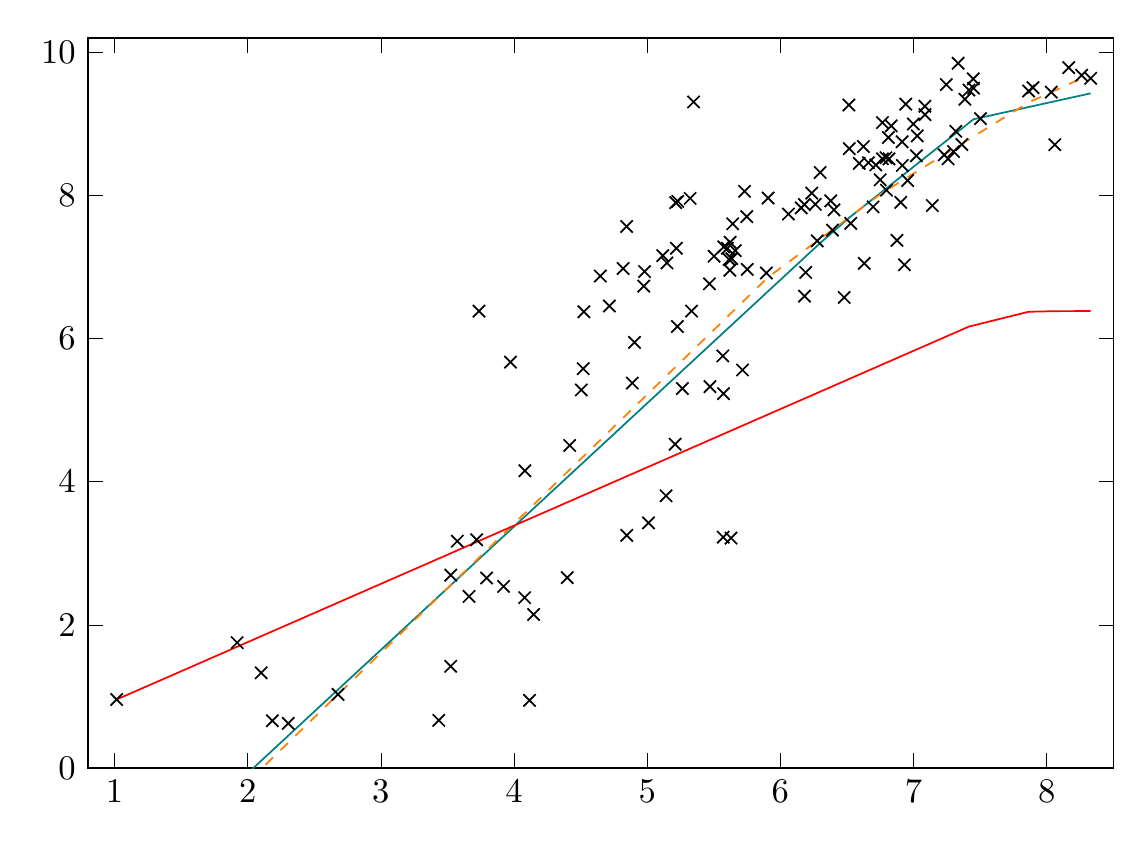} 
		\caption[]%
		{{\small $\tau=0.3$}}    
		\label{fig2:d}
	\end{subfigure}
	\caption[]%
	{\small Illustration of estimated convexified order-$\alpha$, CQR, and CER functions. X-axis: ln(C), Y-axis: ln(Q).} 
	\label{fig:fig2}
\end{figure}

Figure~\ref{fig:fig2} illustrates the estimated convex quantile and expectile functions and the estimated convexified order-$\alpha$ frontier. All three estimators yield a concave piecewise linear curve which can be useful in applications where shadow pricing of non-market inputs and/or outputs is the main object of interest. 

Figures~\ref{fig1:a} and \ref{fig2:a} confirm that as $\tau$ approaches 100\% the estimated order-$\alpha$ frontiers converge to the true unknown full frontier and envelop the observations as much as possible. A large subset of data is thus enough to efficiently estimate the partial frontier. However, the performance of partial frontiers at low quantiles is relatively poor, and quickly deteriorates when $\tau$ gets closer to zero.

Furthermore, indirect estimation of quantiles using expectiles can be a good alternative to estimate monotonic concave quantile functions and even monotonic step quantile functions. For each quantile $\tau$, the indirectly estimated quantile function using expectile regression (teal line) is quite close to the directly estimated quantile function (orange dashed line) (see Figures~\ref{fig:fig1} and \ref{fig:fig2}). 

%----------------%
%
%----------------%

\section{Monte Carlo study}\label{sec:mc}

Having empirically illustrated the estimated quantile functions, we proceed to investigate the finite sample performance of the nonparametric quantile-like estimators through Monte Carlo simulations. The main objective of our simulations is to examine whether the partial frontier estimators can be interpreted as a quantile estimator.

\subsection{Setup}

We generate data according to the following additive Cobb–Douglas production function with $d$ inputs and one output (cf. \citename{Lee2013}, \citeyear*{Lee2013}; \citename{Yagi2018}, \citeyear*{Yagi2018}),
$$y_i = \prod_{d=1}^{D}\bx^{\frac{0.8}{d}}_{d,i} + \varepsilon_i,$$
where the input variables $\bx_i \in \real^{n \times d}$ are randomly and independently drawn from $U[1, 10]$ and the error term $\varepsilon_i$ has three specifications: $\varepsilon_i=v_i$, $\varepsilon_i=-u_i$, and $\varepsilon_i=v_i - u_i$, where $v_i$ and $u_i$ are generated independently from $N(0, \sigma_v^2)$ and $N^+(0, \sigma_u^2)$, respectively. $\sigma_v^2$ and $\sigma_u^2$ are determined once we set signal to noise ratio (SNR) $\lambda$ and variance $\sigma^2$, where $\lambda = \sigma_u / \sigma_v$ and $\sigma^2 = \sigma_u^2 + \sigma_v^2$. Following \citeasnoun{Aigner1977}, ($\sigma^2$, $\lambda$) = (1.88, 1.66), (1.63, 1.24), and (1.35, 0.83) are selected\footnote
{
This corresponds to $\sigma_u$ = 1.174, 0.994, 0.742 and $\sigma_v$ =0.708, 0.801, 0.894, respectively.
} which allow for investigating whether those quantile-like estimators are robust to a wide range of SNR values.

To assess the finite sample performance of the quantile-like estimators, we utilize the standard mean squared error (MSE) and bias statistics to evaluate how the estimated quantile function deviates from the true conditional quantile function. The MSE and bias statistics can be defined as
\[
\text{MSE}=\frac{1}{n} \sum\limits_{i}^{n}\left({{\widehat{Q}}_{y_i}}(\tau\,|\,\bx_i)-{Q_{y_i}}(\tau\,|\,\bx_i)\right)^2,
\]\[
\text{bias}=\frac{1}{n} \sum\limits_{i}^{n}{\left({\widehat Q_{y_i}}(\tau\,|\,\bx_i)-{Q_{y_i}}(\tau\,|\,\bx_i)\right)},
\]
where $\hat{Q}_{y_i}$ denotes the estimated conditional quantile function and $Q_{y_i}$ represents the true conditional quantile function; the latter can be estimated based on the known inverse cumulative distribution function of the error term $\varepsilon_i$, i.e., $F_{\varepsilon_i}^{-1}(\tau)$. The MSE is always greater than or equal to zero, with zero indicating perfect precision; while the bias can be negative, positive, or zero, suggesting whether the estimated conditional quantile function $\hat{Q}_{y_i}$ systematically underestimates ($\text{bias} < 0$), overestimates ($\text{bias} > 0$), or provides an unbiased estimate of ($\text{bias} = 0$) the true conditional quantile function.

In all experiments that follow, we resort to Julia/JuMP (\citename{Dunning2017}, \citeyear*{Dunning2017}) to solve the CQR/CER and isotonic CQR/CER estimators with the commercial off-the-shelf solver MOSEK (9.3).\footnote{
Alternatively, the estimation of CQR/CER and isotonic CQR/CER can be implemented in Python using the pyStoNED package (\citename{Dai2021b}, \citeyear*{Dai2021b}). 
}
The original and convexified order-$\alpha$ estimators are computed using the R packages ``frontiles'' (\citename{Daouia2020}, \citeyear*{Daouia2020}) and ``Benchmarking'' (\citename{Bogetoft2010}, \citeyear*{Bogetoft2010}). All experiments are run on Aalto University's high-performance computing cluster Triton with Xeon @2.8 GHz processors, one CPU, and 3 GB of RAM per task.

\subsection{Experiment with monotonic estimators}\label{sec:exp1}

In the first group of experiments, we explore whether the nonconvex quantile estimator (i.e., isotonic CQR/CER) has better finite sample performance than the nonconvex partial frontier estimator (i.e., order-$\alpha$) in estimating the quantile production functions. We consider 225 scenarios with different numbers of observations (50, 100, 200, 500, and 1000), input dimensions (1, 2, and 3), SNRs (1.66, 1.24, and 0.83), and quantiles (0.1, 0.3, 0.5, 0.7, and 0.9). Each scenario is replicated 1000 times to calculate the MSE and bias statistics. For the sake of comparison, the expectiles $\tilde{\tau}$ are transformed into their corresponding quantiles $\tau$ based on the empirical inverse quantile function of the error term $\varepsilon_i$.

\begin{table}[H]
	\centering
	\caption{Performance in estimating monotonic quantile function $Q_y$ with $\sigma^2=1.88$ and $\tau=0.9$. ICQR = Isotonic CQR, ICER = Isotonic CER.}
    \begin{tabular}{rrrrrrrrr}
	\toprule
	\multicolumn{1}{c}{\multirow{2}[4]{*}{$d$}} & \multicolumn{1}{c}{\multirow{2}[4]{*}{$n$}} & \multicolumn{3}{c}{MSE} &       & \multicolumn{3}{c}{Bias} \\
	\cmidrule{3-5}\cmidrule{7-9}          &       & ICQR  & ICER  & Order-$\alpha$ &       & ICQR  & ICER  & Order-$\alpha$ \\
	\midrule
	1     & 50    & 0.368 & 0.406 & 1.470 &       & -0.284 & -0.385 & -0.969 \\
	& 100   & 0.215 & 0.231 & 1.479 &       & -0.166 & -0.252 & -1.000 \\
	& 200   & 0.132 & 0.135 & 1.419 &       & -0.097 & -0.156 & -1.003 \\
	& 500   & 0.069 & 0.067 & 1.409 &       & -0.051 & -0.086 & -1.012 \\
	& 1000  & 0.042 & 0.039 & 1.404 &       & -0.031 & -0.054 & -1.018 \\
	&   \vspace{-0.7em}\\ 
	2     & 50    & 0.933 & 0.989 & 1.777 &       & -0.671 & -0.731 & -1.076 \\
	& 100   & 0.639 & 0.692 & 1.784 &       & -0.522 & -0.591 & -1.121 \\
	& 200   & 0.416 & 0.454 & 1.742 &       & -0.387 & -0.454 & -1.131 \\
	& 500   & 0.236 & 0.255 & 1.712 &       & -0.261 & -0.313 & -1.146 \\
	& 1000  & 0.150 & 0.160 & 1.692 &       & -0.186 & -0.231 & -1.150 \\
	&   \vspace{-0.7em}\\ 
	3     & 50    & 1.479 & 1.519 & 1.959 &       & -0.912 & -0.944 & -1.115 \\
	& 100   & 1.152 & 1.197 & 1.901 &       & -0.787 & -0.827 & -1.129 \\
	& 200   & 0.875 & 0.920 & 1.882 &       & -0.668 & -0.714 & -1.158 \\
	& 500   & 0.572 & 0.602 & 1.849 &       & -0.514 & -0.558 & -1.181 \\
	& 1000  & 0.405 & 0.425 & 1.820 &       & -0.415 & -0.455 & -1.191 \\
	\bottomrule
	\end{tabular}%
	\label{tab:tab2}%
\end{table}%

Table~\ref{tab:tab2} reports the effect of sample size on the performance of each estimator in the case of $\tau = 0.9$, a commonly used parameter value in the robust frontier estimation. The results show that the finite sample performance of isotonic CQR and isotonic CER is superior to that of order-$\alpha$ in terms of both MSE and bias statistics. Further, the performance of each estimator improves with a larger sample size $n$, as expected. Specifically, the MSE and bias statistics of isotonic CQR and isotonic CER estimators get closer to zero as $n$ increases, which suggests that both estimators are consistent. The MSE of order-$\alpha$ also generally falls as the sample size increases, whereas the bias does not diminish as the sample size increases due to losing the $\sqrt{n}$-consistency (\citename{Aragon2005}, \citeyear*{Aragon2005}). 

Next, consider the choice of quantiles $\tau$. Figure~\ref{fig:ex1} depicts the MSE results in estimating the quantile functions for different input dimensions and SNR specifications, while keeping the sample size fixed at $n=1000$. In all scenarios considered, isotonic CQR and isotonic CER have far smaller MSE values than order-$\alpha$. However, the difference in terms of MSE between isotonic CQR and isotonic CER is quite small. Another interesting observation is that when the quantile $\tau$ becomes smaller, the MSE of order-$\alpha$ sees a systematic increasing trend.
\begin{figure}[H]
\begin{center}
\includegraphics[width=1\textwidth]{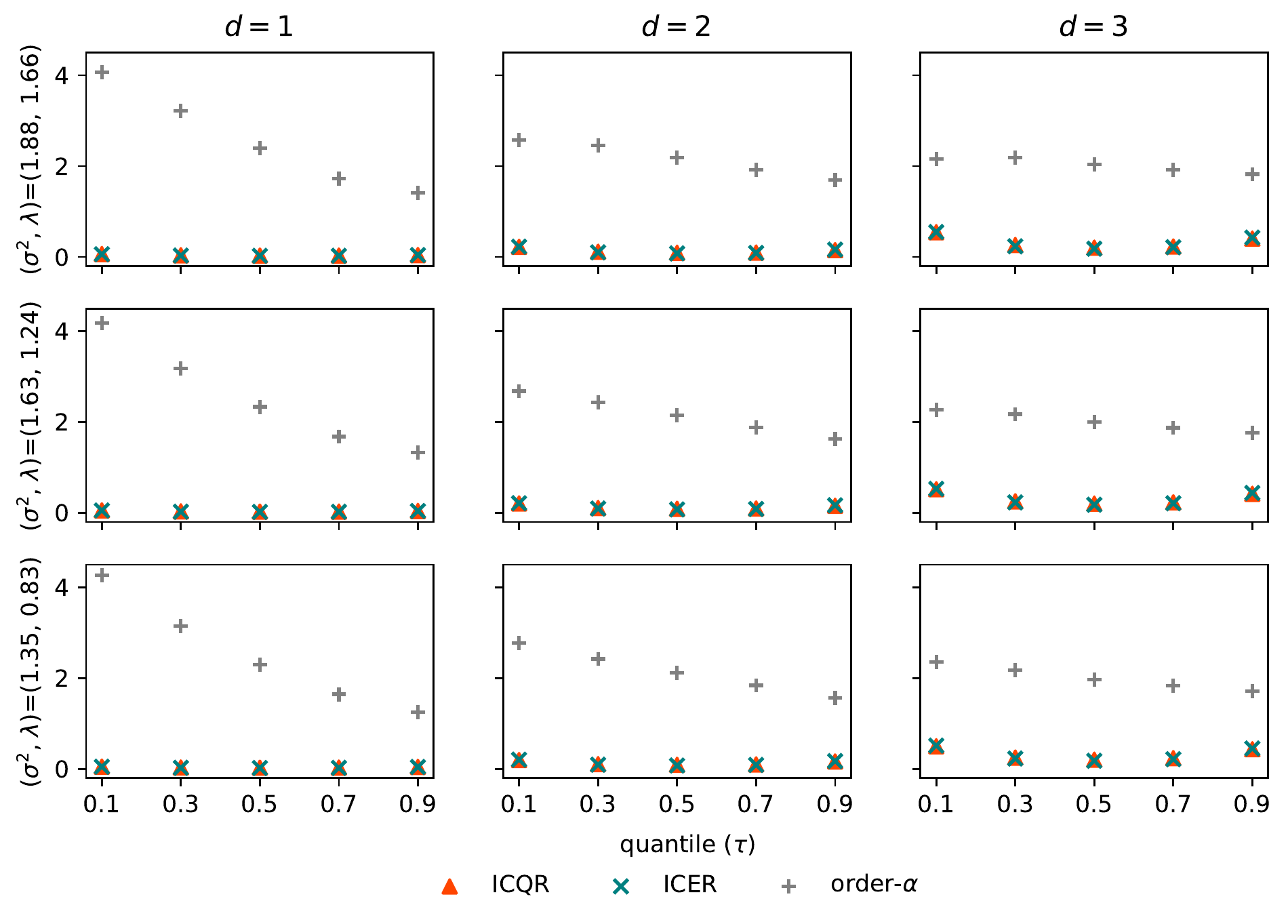}
\end{center}
\vspace{-1em}
\caption{MSE results of order-$\alpha$, isotonic CQR, and isotonic CER with $n=1000$.}
\label{fig:ex1}
\end{figure}

We note that the MSE of each estimator generally increases as more input variables are introduced. This is because a larger dimensionality increases the data sparsity, which degrades the performance of each estimator, \textit{ceteris paribus}. For example, when $\tau = 0.9$ and $\sigma^2 = 1.88$, order-$\alpha$'s MSE increases from 1.40 in the one-input case to 1.69 in the two-input case to 1.82 in the three-input case, and isotonic CQR's and isotonic CER's MSE values rise from 0.04 to 0.15 to 0.40 and from 0.04 to 0.16 to 0.42, respectively. A similar curse of dimensionality also exists in the DEA simulation studies, where the performance of DEA deteriorates when the number of inputs increases, \textit{ceteris paribus} (see, e.g., \citename{Pedraja-Chaparro1999a}, \citeyear*{Pedraja-Chaparro1999a}; \citename{Cordero2015}, \citeyear*{Cordero2015}). 
\begin{figure}[H]
\begin{center}
\includegraphics[width=1\textwidth]{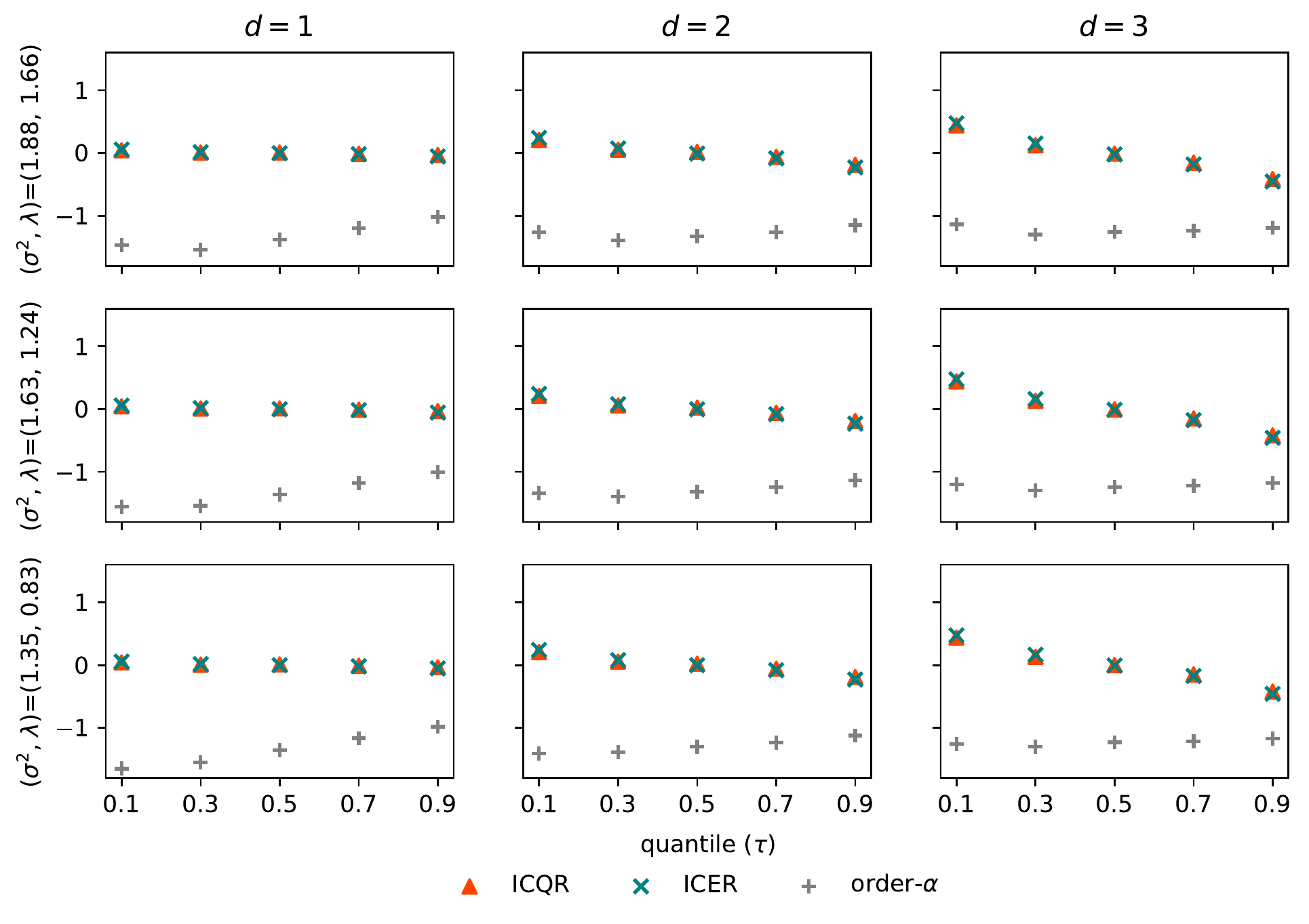}
\end{center}
\vspace{-1em}
\caption{Bias results of order-$\alpha$, isotonic CQR, and isotonic CER with $n = 1000$.}
\label{fig:ex2}
\end{figure}  

Figure~\ref{fig:ex2} displays the bias results. The isotonic CQR and isotonic CER estimators yield both positive and negative biases. The bias gets greater (in terms of the absolute value) when $\tau$ deviates from 0.5: it becomes a larger positive value when $\tau$ decreases from 0.5 and, on the opposite, a smaller negative value when $\tau$ increases from 0.5. By contrast, the order-$\alpha$ estimator yields only negative biases. Since the order-$\alpha$ frontier converges to the FDH full frontier in a finite sample when $\tau \xrightarrow{} 1$, the observed negative bias of order-$\alpha$ for each quantile $\tau$ is due to the small sample bias, similar to FDH. Moreover, the bias of order-$\alpha$ becomes larger as $\tau$ decreases because the effective sample size gets smaller. 

Furthermore, we obtain similar results about the MSE and bias statistics and the sample size effect in additional experiments where the composite error term $\varepsilon_i$ contains either inefficiency ($\varepsilon_i = - u_i$) or noise ($\varepsilon_i=v_i$) (see Appendix~\ref{app:err}). We also investigate the estimators' performance in the presence of functional form misspecification and find that isotonic CQR and isotonic CER outperforms order-$\alpha$ in terms of both MSE and bias statistics (see Appendix~\ref{app:miss}). To examine the robustness of each estimator, we consider additional scenarios with outliers. The results suggest that the isotonic CER estimator is superior in all scenarios, and isotonic CER and isotonic CQR are more robust than the order-$\alpha$ estimator due to the fact that order-$\alpha$ does not satisfy the quantile property (see Appendix~\ref{app:out}). 

Another point worth noting is that the order-$\alpha$ estimator is found to perform relatively poorly at low quantiles. This contradicts the fact that a quantile estimator should perform virtually equally well at all quantiles, as suggested by the quantile property (see parts i) and ii) of Theorem \ref{the:the1}). Thus, we further investigate the frequency that the monotonic estimators would violate the quantile property in 1000 replications.
\begin{table}[H]
	\centering
	\caption{Frequency of quantile property violations for order-$\alpha$ in 1000 replications.}
     \small{
	\begin{tabular}{rrrrrrrrrrrr}
		\toprule
		\multicolumn{1}{c}{\multirow{2}[4]{*}{$n$}} & \multicolumn{1}{c}{\multirow{2}[4]{*}{$d$}} & \multicolumn{1}{c}{\multirow{2}[4]{*}{($\sigma^2$,$\lambda$)}} & \multicolumn{3}{c}{$\tau$} &       & \multicolumn{1}{c}{\multirow{2}[4]{*}{$n$}} & \multicolumn{1}{c}{\multirow{2}[4]{*}{$d$}} & \multicolumn{1}{c}{\multirow{2}[4]{*}{($\sigma^2$,$\lambda$)}} & \multicolumn{2}{c}{$\tau$} \\
		\cmidrule{4-6}\cmidrule{11-12}          &       &       & 0.1   & 0.3   & 0.5   &       &       &       &       & 0.1   & 0.3 \\
    \midrule
    50    & 1     & (1.88, 1.66) & 95.8 \% & 0.3 \% &       &       & 100   & 1     & (1.88, 1.66) & 99.6 \% &  \\
          &       & (1.63, 1.24) & 90.3 \% & 0.3 \% &       &       &       &       & (1.63, 1.24) & 97.7 \% &  \\
          &       & (1.35, 0.83) & 75.1 \% & 0.4 \% &       &       &       &       & (1.35, 0.83) & 86.7 \% &  \\
          & 2     & (1.88, 1.66) & 89.6 \% & 1.7 \% &       &       &       & 2     & (1.88, 1.66) & 88.8 \% &  \\
          &       & (1.63, 1.24) & 83.8 \% & 1.6 \% &       &       &       &       & (1.63, 1.24) & 76.8 \% &  \\
          &       & (1.35, 0.83) & 79.7 \% & 1.9 \% &       &       &       &       & (1.35, 0.83) & 66.1 \% & 0.1 \% \\
          & 3     & (1.88, 1.66) & 98.9 \% & 7.6 \% & 0.2 \% &      &       & 3     & (1.88, 1.66) & 98.1 \% & 0.5 \% \\
          &       & (1.63, 1.24) & 98.8 \% & 8.2 \% & 0.2 \% &      &       &       & (1.63, 1.24) & 97.3 \% & 0.7 \% \\
          &       & (1.35, 0.83) & 98.7 \% & 7.1 \% & 0.1 \% &      &       &       & (1.35, 0.83) & 96.7 \% & 0.5 \% \\
          &   \vspace{-0.7em}\\ 
    500   & 1     & (1.88, 1.66) & 100.0 \% &      &       &       & 1000  & 1     & (1.88, 1.66) & 100.0 \% &  \\
          &       & (1.63, 1.24) & 100.0 \% &      &       &       &       &       & (1.63, 1.24) & 100.0 \% &  \\
          &       & (1.35, 0.83) & 98.7 \% &       &       &       &       &       & (1.35, 0.83) & 100.0 \% &  \\
          & 2     & (1.88, 1.66) & 89.3 \% &       &       &       &       & 2     & (1.88, 1.66) & 90.1 \% &  \\
          &       & (1.63, 1.24) & 32.6 \% &       &       &       &       &       & (1.63, 1.24) & 10.8 \% &  \\
          &       & (1.35, 0.83) & 10.0 \% &       &       &       &       &       & (1.35, 0.83) & 0.9 \% &  \\
          & 3     & (1.88, 1.66) & 74.5 \% &       &       &       &       & 3     & (1.88, 1.66) & 37.9 \% &  \\
          &       & (1.63, 1.24) & 60.2 \% &       &       &       &       &       & (1.63, 1.24) & 17.1 \% &  \\
          &       & (1.35, 0.83) & 50.7 \% &       &       &       &       &       & (1.35, 0.83) & 11.3 \% &  \\
    \bottomrule
	\end{tabular}%
 }
  \begin{tablenotes}
        \setlength\labelsep{0pt}
        \footnotesize
        \item \textit{Note:} The blanks in the columns of different quantiles denote zero violations.
        \end{tablenotes}
	 \label{tab:tab3}%
\end{table}%

Our simulations confirm that both isotonic CQR and isotonic CER exactly satisfy the quantile property with the violation rates being zero. In contrast, the quantile property is systematically violated in order-$\alpha$ at low quantiles, particularly at the 10\% quantile (see Table~\ref{tab:tab3}). The observed violations are due to the fact that order-$\alpha$ relies on the quantiles of an appropriate distribution based on a subset of the sample. However, for high quantiles (i.e., $\tau > 0.5$), the violation rates in order-$\alpha$ are also equal to zero, suggesting that order-$\alpha$ can satisfy the quantile property for large $\tau$. This is consistent with the findings from the MSE and bias comparisons. In conclusion, the Monte Carlo simulations presented in this sub-section demonstrate that the true quantile estimators perform notably better than the partial frontiers in the nonconvex case.

\subsection{Experiment with monotonic and convex estimators}\label{sec:exp2}

We next conduct the second group of experiments to compare the performance of the convex estimators (i.e., CQR, CER, and convexified order-$\alpha$) using the same scenarios as in Section \ref{sec:exp1}. Table~\ref{tab:tab4} presents the effects of sample size and dimensionality on the MSE and bias statistics for $\tau = 0.9$. Figures~\ref{fig:ex3} and \ref{fig:ex4} display the MSE and bias statistics of the convexified order-$\alpha$, CQR, and CER estimators as we alternate the values of $\tau$ and SNR, while keeping the sample size constant at $n=1000$. 

The simulation results reported in Table~\ref{tab:tab4} suggest that both CQR and CER estimators exhibit superior performance compared to the convexified order-$\alpha$ estimator both in terms of MSE and bias. Further, the MSE and bias of CQR and CER converge towards zero as the sample size $n$ increases, while this is not the case for the convexified order-$\alpha$ estimator when the dimensionality $d = 1, 2$. 

Comparing Figures~\ref{fig:ex1} and \ref{fig:ex3}, we notice that the MSE statistic for each estimator decreases to a great extent once imposing the concavity constraint, especially for order-$\alpha$. For instance, in the one-input case with $\sigma=1.88$, the average MSE of convexified order-$\alpha$ for the five estimated quantiles decreases by more than 160\% compared to its original counterpart. This finding confirms that the power of the CQR, CER, and convexified order-$\alpha$ estimators derives from their global shape constraints, including monotonicity and convexity/concavity (\citename{Kuosmanen2020}, \citeyear*{Kuosmanen2020}).
\begin{table}[H]
	\centering
	\caption{Performance in estimating monotonic and concave quantile function $Q_y$ with $\sigma^2=1.88$ and $\tau=0.9$.CQR = Convex quantile regression, CER = Convex expectile regression, COA = Convexified order-$\alpha$.}
	\begin{tabular}{rrrrrrrrr}
		\toprule
		\multicolumn{1}{c}{\multirow{2}[4]{*}{$d$}} & \multicolumn{1}{c}{\multirow{2}[4]{*}{$n$}} & \multicolumn{3}{c}{MSE} &       & \multicolumn{3}{c}{Bias} \\
		\cmidrule{3-5}\cmidrule{7-9}          &       & \multicolumn{1}{l}{CQR} & \multicolumn{1}{l}{CER} & \multicolumn{1}{l}{COA} &       & \multicolumn{1}{l}{CQR} & \multicolumn{1}{l}{CER} & \multicolumn{1}{l}{COA} \\
		\midrule
		1     & 50    & 0.170 & 0.169 & 0.635 &       & -0.056 & -0.110 & -0.497 \\
		& 100   & 0.094 & 0.088 & 0.704 &       & -0.023 & -0.056 & -0.574 \\
		& 200   & 0.050 & 0.050 & 0.757 &       & -0.009 & -0.023 & -0.636 \\
		& 500   & 0.022 & 0.021 & 0.831 &       & -0.005 & -0.010 & -0.694 \\
		& 1000  & 0.031 & 0.011 & 0.880 &       & -0.005 & -0.005 & -0.738 \\
		&   \vspace{-0.7em}\\ 
		2     & 50    & 0.377 & 0.395 & 0.838 &       & -0.194 & -0.308 & -0.574 \\
		& 100   & 0.231 & 0.239 & 0.854 &       & -0.106 & -0.186 & -0.601 \\
		& 200   & 0.133 & 0.137 & 0.880 &       & -0.056 & -0.102 & -0.621 \\
		& 500   & 0.067 & 0.069 & 0.934 &       & -0.027 & -0.048 & -0.654 \\
		& 1000  & 0.039 & 0.039 & 0.964 &       & -0.013 & -0.026 & -0.668 \\
		&   \vspace{-0.7em}\\ 
		3     & 50    & 0.632 & 0.671 & 0.857 &       & -0.406 & -0.502 & -0.574 \\
		& 100   & 0.412 & 0.438 & 0.741 &       & -0.249 & -0.344 & -0.485 \\
		& 200   & 0.263 & 0.280 & 0.709 &       & -0.152 & -0.229 & -0.440 \\
		& 500   & 0.141 & 0.150 & 0.722 &       & -0.082 & -0.126 & -0.427 \\
		& 1000  & 0.087 & 0.091 & 0.732 &       & -0.046 & -0.078 & -0.419 \\
		\bottomrule
	\end{tabular}%
	\label{tab:tab4}%
\end{table}%

While the performance of order-$\alpha$ increases after imposing the concavity constraint, the CQR and CER estimators continue to outperform convexified order-$\alpha$ in all cases considered. However, the relative MSE ratio between convexified order-$\alpha$ and CQR (or CER) decreases as the input dimensionality or the quantile $\tau$ increases. Regarding the effect of different SNRs, the smaller the value of $\lambda$, the higher the difference in MSE between the quantile and partial frontier estimators. However, the difference in MSE among the three SNRs is close to zero when the quantile $\tau$ approaches 1. 
\begin{figure}[H]
    \centering
    \includegraphics[width=1\textwidth]{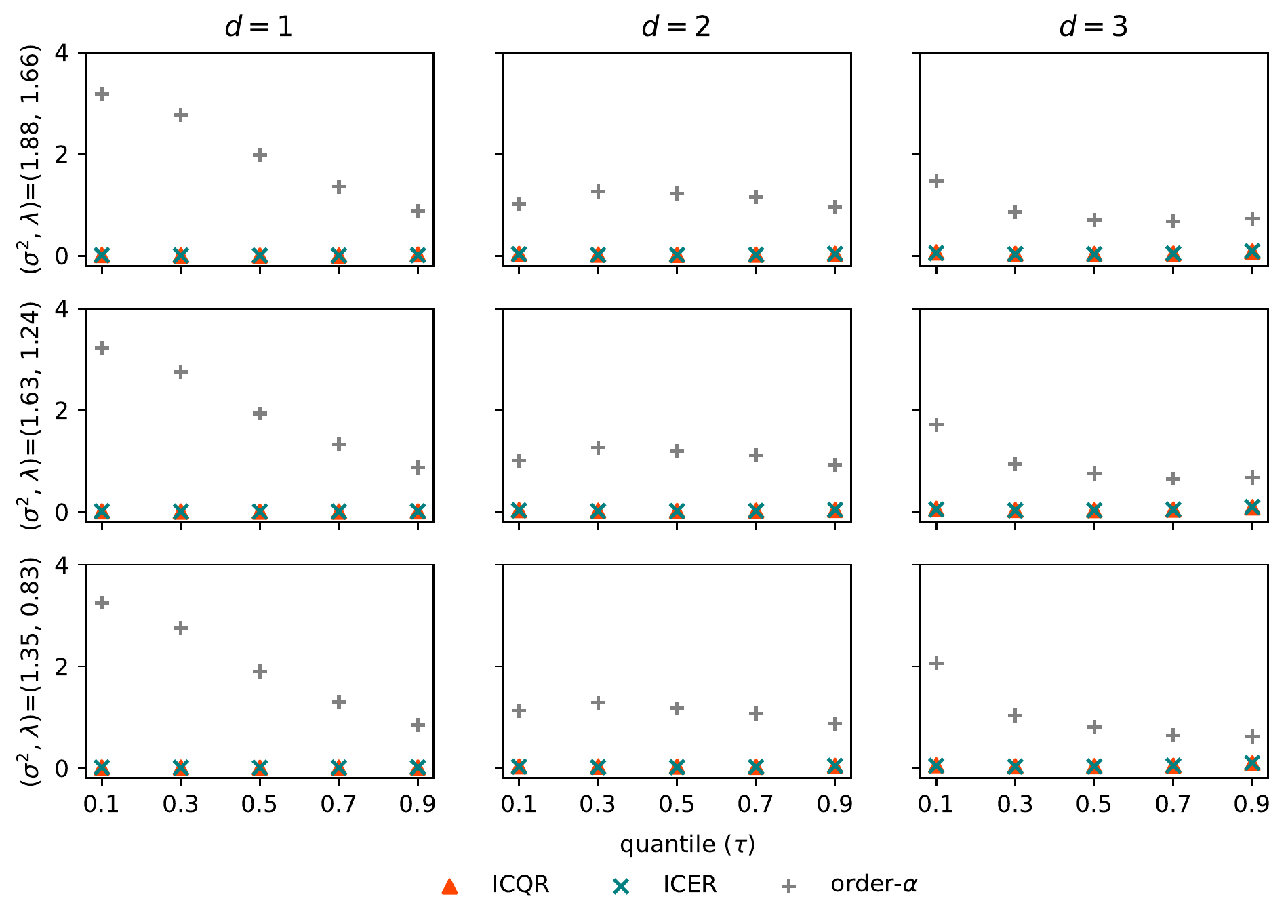}
    \caption{MSE results of convexified order-$\alpha$, CQR, and CER with $n=1000$.}
    \label{fig:ex3}
\end{figure}

Recall that the biases of order-$\alpha$ in all considered scenarios are negative, indicating that the estimated partial frontiers systematically underestimate the true quantile functions. After imposing the concavity constraint, for the three-input cases, the convexified order-$\alpha$ estimator does not only underestimate but can also overestimate the true quantile function. Moreover, the absolute bias of convexified order-$\alpha$ is larger than that of CQR/CER in all scenarios. Note that CQR and CER can better fit the true quantile functions with lower MSE and bias values compared to the monotonic estimators in Section \ref{sec:exp1}.
\begin{figure}[H]
    \centering
    \includegraphics[width=1\textwidth]{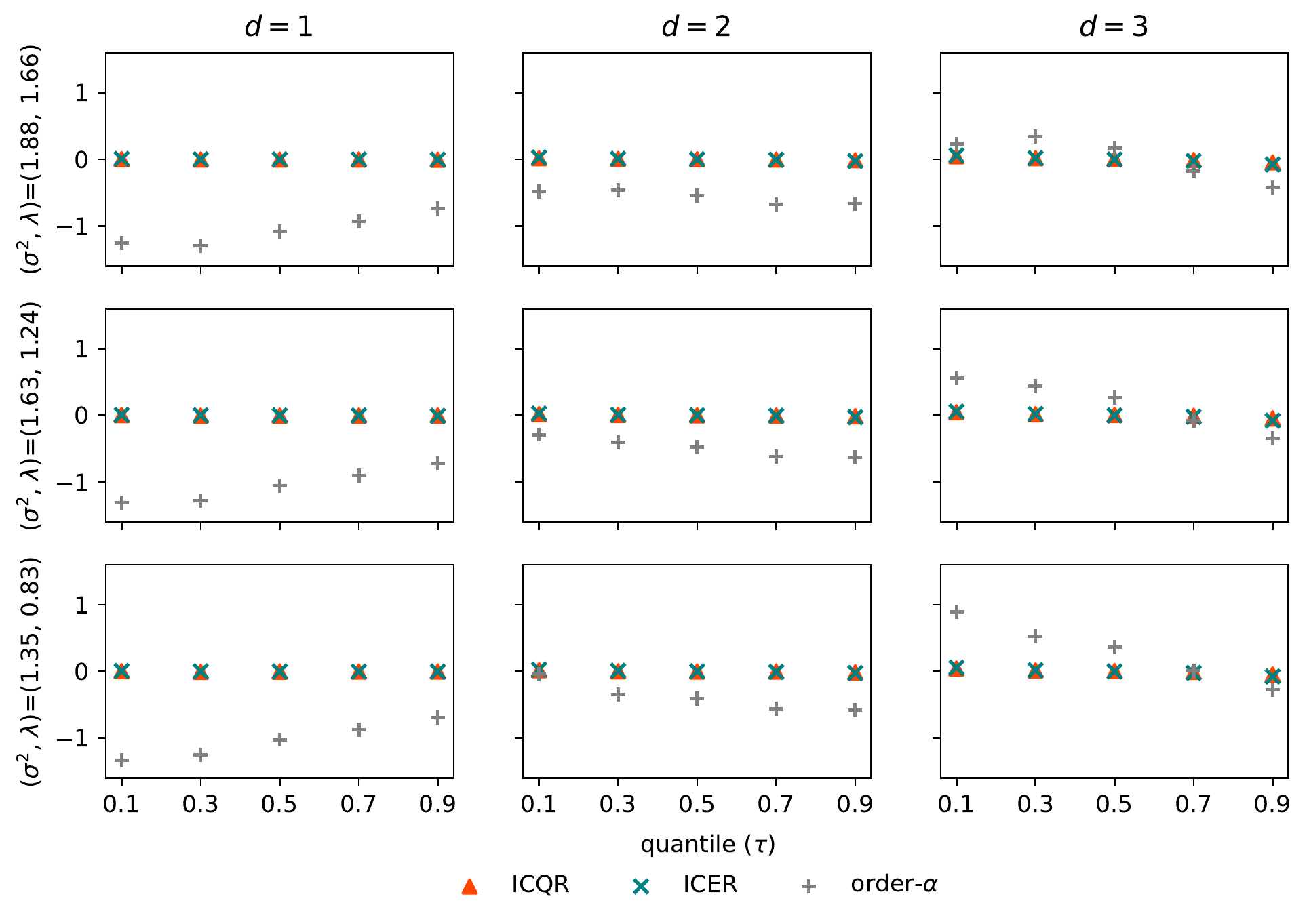}
    \caption{Bias results of convexified order-$\alpha$, CQR, and CER with $n=1000$.}
    \label{fig:ex4}
\end{figure}

The simulation results in Sections~\ref{sec:exp1} and \ref{sec:exp2} reveal that the indirect estimation of quantiles using expectiles improves the performance in most scenarios considered, particularly for the concave quantile functions. Table~\ref{tab:tab5} reports the percentage of simulation rounds where the MSE of the indirect expectile estimators is lower than that of the direct quantile estimators. Compared to isotonic CQR, isotonic CER has smaller MSE values for most quantiles considered except for those extreme quantiles (e.g., the 10\% and 90\% quantiles). Further, when we impose the concavity constraint, the CER estimator outperforms the CQR estimator in a larger proportion of scenarios (e.g., all scenarios at the 10\% and 50\% quantiles). The observation from Table~\ref{tab:tab5} suggests that the indirect estimation of quantiles through expectiles performs better when $\tau$ is close to 0.5, whereas the direct quantile estimation remains competitive when $\tau$ is very small or very large.
\begin{table}[H]
	\centering
	\caption{Percentage of expectile estimation's MSE less than quantile estimation's MSE.}
	\begin{tabular}{rrrrrr}
		\toprule
		\multicolumn{1}{c}{Model specification} & \multicolumn{1}{c}{$\tau$} & \multicolumn{1}{c}{$\varepsilon=v-u$} & \multicolumn{1}{l}{$\varepsilon=v$} & \multicolumn{1}{c}{$\varepsilon=-u$} & \multicolumn{1}{c}{No. scenarios} \\
		\midrule
		\multicolumn{1}{l}{Monotonicity} & \multicolumn{1}{l}{all quantiles} & 65.8 \% & 63.1 \% & 57.3 \% & 225 \\
		& 0.1   & 22.2 \% & 13.3 \% & 26.7 \% & 45 \\
		& 0.5   & 100.0 \% & 100.0 \% & 100.0 \% & 45 \\
		& 0.9   & 13.3 \% & 13.3 \% & 13.3 \% & 45 \\
		&   \vspace{-0.7em}\\ 
		\multicolumn{1}{l}{+ Concavity} & \multicolumn{1}{l}{all quantiles} & 88.0 \% & 88.0 \% & 89.8 \% & 225 \\
		& 0.1   & 100.0 \% & 100.0 \% & 100.0 \% & 45 \\
		& 0.5   & 100.0 \% & 100.0 \% & 100.0 \% & 45 \\
		& 0.9   & 40.0 \% & 40.0 \% & 55.6 \% & 45 \\
		\bottomrule
	\end{tabular}%
	\label{tab:tab5}%
\end{table}%

%----------------%
%
%----------------%

\section{Conclusions}\label{sec:conc}

Partial frontiers are originally developed to increase robustness to outliers and extreme data, which are often referred to as quantiles in the literature. However, whether the partial frontiers are truly quantile functions remains an open question.

To address this issue, we have extended CQR and CER as alternatives of order-$\alpha$ to estimate monotonic quantile functions and, on the other hand, developed convexified order-$\alpha$ to facilitate comparisons with CQR and CER. These quantile-like estimators are applied to an empirical dataset of U.S. electric power plants to illustrate and visualize what the estimated partial frontiers and quantiles look like. Further, the finite sample performance of these estimators is numerically compared by Monte Carlo simulations.

The empirical application demonstrates that the estimated isotonic CQR and CER functions are step functions enveloping exactly $100\tau\%$ of the observations for each quantile $\tau$. In contrast, the estimated order-$\alpha$ frontier does not necessarily envelope $100\tau\%$ of the observations, but rather less than $100\tau\%$ of the observations especially when the quantile $\tau$ gets smaller. The order-$\alpha$ estimator does not satisfy monotonicity in all considered cases. It is geared towards estimating high quantiles but deteriorates when $\tau$ decreases.

The Monte Carlo simulations have demonstrated that partial frontiers perform relatively poorly in terms of both MSE and bias compared to quantile estimators, especially for low quantiles. The partial frontier estimators fit a frontier to a subset of data, which results in inefficient utilization of the information in the full sample and violates the quantile property of Theorem \ref{the:the1} at most quantiles. Furthermore, correctly imposed convexity improves the performance in both partial frontier and quantile function approaches, but clearly, quantile estimators are better for estimating quantiles than partial frontiers whether convexity is imposed or not. The simulation evidence also shows that the indirect quantile estimators through expectile regression exhibit better finite sample performance over the direct quantile estimators in most cases.

The findings drawn from this paper can provide insights into shaped constrained quantile regression. However, there are several fascinating avenues for future research. We have deliberately kept away from statistical inferences, and that further work in this direction, e.g., how to apply bootstrapping to CQR, CER and their nonconvex counterparts, would be needed. Another avenue for future research is to extend the current research to multi-input and multi-output settings.

It is worth noting that the partial frontiers tend to be used in efficiency measurement and benchmarking, whereas the quantile estimators have become increasingly used in the context of shadow pricing. We hope that the systematic review and performance comparison in a controlled environment of Monte Carlo simulations could help to facilitate further exchange and interaction between these two separate streams of literature.

%----------------%
%
%----------------%

\section*{Supplementary materials}

The supplementary materials contain detailed proof of Theorem 3 in Section \ref{sec:icqr} and additional simulation experiments results in Section \ref{sec:mc}.

%----------------%
%
%----------------%

\section*{Acknowledgments}\label{sec:ack}

We acknowledge the computational resources provided by the Aalto Science-IT project. Sheng Dai gratefully acknowledges financial support from the Foundation for Economic Education (Liikesivistysrahasto) [grants no. 180019, 190073, 21007] and the HSE Support Foundation [grant no. 11--2290]. Xun Zhou gratefully acknowledges financial support from the Finnish Cultural Foundation [grant no. 00201201].

%----------------%
%
%----------------%

\bibliography{References}

%----------------%
%
%----------------%

\newpage
\section*{Appendix}\label{sec:app}
\renewcommand{\thesubsection}{\Alph{subsection}}
\setcounter{table}{0}
\setcounter{figure}{0}
\setcounter{equation}{0}
\setcounter{theorem}{0}
\renewcommand{\theequation}{A\arabic{equation}} 
\renewcommand{\thetable}{B\arabic{table}} 

\subsection{Proof of Theorem 3}\label{app:proof3}

We can rewrite isotonic CQR \eqref{eq:icqr} and isotonic CER \eqref{eq:icer} as the respective equivalent problems according to the quantile and expectile regression definitions (\citename{Koenker1978}, \citeyear*{Koenker1978}; \citename{Newey1987}, \citeyear*{Newey1987}). Specifically, isotonic CQR \eqref{eq:icqr} can be reformulated as
\begin{alignat}{2}
	\underset{\alpha,\mathbf{\bbeta}}{\mathop{\min }}&\,\tau \sum\limits_{i=1}^{n}{\rho_\tau(y_i - \alpha_i - \bbeta_{i}^{'}{{\bx}_{i}})} \label{a1}\\
	\mbox{\textit{s.t.}}\quad
	& p_{ih}\Big(\alpha_i+\bbeta_{i}^{'}{{\bx}_{i}} \Big) \le p_{ih}\Big(\alpha_h+\bbeta _h^{'}\bx_i \Big)  &{\quad}& \forall i,h  \notag\\
	& \bbeta_i\ge \bzero &{\quad}& \forall i  \notag
\end{alignat}
and isotonic CER \eqref{eq:icer} is defined as
\begin{alignat}{2}
	\underset{\alpha,\mathbf{\bbeta}}{\mathop{\min }}&\,\tilde{\tau} \sum\limits_{i=1}^{n}{\rho_{\tilde{\tau}}(y_i - \alpha_i - \bbeta_{i}^{'}{{\bx}_{i}})^2} \label{a2}\\
	\mbox{\textit{s.t.}}\quad
	& p_{ih}\Big(\alpha_i+\bbeta_{i}^{'}{{\bx}_{i}} \Big) \le p_{ih}\Big(\alpha_h+\bbeta _h^{'}\bx_i \Big)  &{\quad}& \forall i,h  \notag\\
	& \bbeta_i\ge \bzero &{\quad}& \forall i  \notag
\end{alignat}

If $p_{ih}=1$, then isotonic CQR and isotonic CER are reduced to the original CQR and CER problem. Therefore, the quantile property (i.e., part i) in Theorem \ref{the:the1}) and the expectile property (i.e., Theorem \ref{the:the2}) are obviously retained.

If $p_{ih}=0$, then \eqref{a1} and \eqref{a2} are simplified as 
\begin{alignat}{2}
	\underset{\alpha,\mathbf{\bbeta}}{\mathop{\min }}&\,\tau \sum\limits_{i=1}^{n}{\rho_\tau(y_i - \alpha_i - \bbeta_{i}^{'}{{\bx}_{i}})} \label{a3}\\
	\mbox{\textit{s.t.}}\quad
	& \bbeta_i\ge \bzero &{\quad}& \forall i  \notag
\end{alignat}
and 
\begin{alignat}{2}
	\underset{\alpha,\mathbf{\bbeta}}{\mathop{\min }}&\,\tilde{\tau} \sum\limits_{i=1}^{n}{\rho_{\tilde{\tau}}(y_i - \alpha_i - \bbeta_{i}^{'}{{\bx}_{i}})^2} \label{a4}\\
	\mbox{\textit{s.t.}}\quad
	& \bbeta_i\ge \bzero &{\quad}& \forall i  \notag
\end{alignat}
Following \citeasnoun{Wang2014c}, the quantile property in isotonic CQR \eqref{a3} is then easy to be verified due to that the proof relies on decision variables $\alpha_i$ only (see proof of Theorem 1 in \citename{Wang2014c}, \citeyear*{Wang2014c}). Similarly, the expectile property for isotonic CER \eqref{a4} is also straightforward in analogy to \citeasnoun{Kuosmanen2020b}. 

\newpage
\subsection{Additional experiments and results}\label{app:experiments} 
\subsubsection{Experiment with different error specifications}\label{app:err}

\begin{table}[H]
  \centering
  \caption{Performance in estimating the quantile function $Q_y$ when $\varepsilon_i=v_i$ and $\varepsilon_i=-u_i$ with $n = 1000$ and $d= 1$, respectively. ICQR = Isotonic CQR, ICER = Isotonic CER, COA = Convexified order-$\alpha$.}
    \begin{tabular}{rrrrrrrrrrr}
    \toprule
    \multicolumn{1}{c}{\multirow{2}[4]{*}{$\varepsilon$}} &       &       & \multicolumn{2}{c}{ICQR} &       & \multicolumn{2}{c}{ICER} &       & \multicolumn{2}{c}{COA} \\
\cmidrule{4-5}\cmidrule{7-8}\cmidrule{10-11}          &       & \multicolumn{1}{l}{$\tau$} & \multicolumn{1}{l}{Bias} & \multicolumn{1}{l}{MSE} &       & \multicolumn{1}{l}{Bias} & \multicolumn{1}{l}{MSE} &       & \multicolumn{1}{l}{Bias} & \multicolumn{1}{l}{MSE} \\
    \midrule
    \multicolumn{1}{l}{$\sigma_v$} & 0.708 & 0.1   & 0.040 & 0.032 &        & 0.048 & 0.029 &       & -1.900 & 4.895 \\
          &       & 0.3   & 0.016 & 0.021 &        & 0.017 & 0.016 &        & -1.593 & 3.352 \\
          &       & 0.5   & 0.017 & 0.020 &        & 0.002 & 0.014 &        & -1.310 & 2.191 \\
          &       & 0.7   & -0.005 & 0.021 &       & -0.014 & 0.016 &       & -1.053 & 1.358 \\
          &       & 0.9   & -0.029 & 0.031 &       & -0.047 & 0.029 &       & -0.800 & 0.761 \\
          & 0.801 & 0.1   & 0.041 & 0.038 &        & 0.057 & 0.035 &        & -1.856 & 4.687 \\
          &       & 0.3   & 0.018 & 0.025 &        & 0.017 & 0.020 &        & -1.571 & 3.257 \\
          &       & 0.5   & 0.017 & 0.023 &        & 0.002 & 0.017 &        & -1.315 & 2.206 \\
          &       & 0.7   & -0.006 & 0.025 &       & -0.014 & 0.019 &       & -1.082 & 1.435 \\
          &       & 0.9   & -0.030 & 0.037 &       & -0.049 & 0.035 &       & -0.843 & 0.855 \\
          & 0.894 & 0.1   & 0.044 & 0.044 &        & 0.052 & 0.041 &        & -1.814 & 4.499 \\
          &       & 0.3   & 0.019 & 0.030 &        & 0.017 & 0.023 &        & -1.553 & 3.177 \\
          &       & 0.5   & 0.019 & 0.028 &        & 0.003 & 0.020 &        & -1.319 & 2.220 \\
          &       & 0.7   & -0.005 & 0.029 &       & -0.015 & 0.023 &       & -1.107 & 1.508 \\
          &       & 0.9   & -0.032 & 0.043 &       & -0.051 & 0.041 &       & -0.884 & 0.955 \\
          &   \vspace{-0.7em}\\  
    \multicolumn{1}{l}{$\sigma_u$} & 1.174 & 0.1   & 0.044 & 0.050 &       & 0.055 & 0.045 &       & -1.670 & 4.468 \\
          &       & 0.3   & 0.009 & 0.028 &        & 0.012 & 0.019 &       & -1.639 & 3.582 \\
          &       & 0.5   & 0.009 & 0.020 &        & -0.005 & 0.013 &      & -1.427 & 2.551 \\
          &       & 0.7   & -0.015 & 0.015 &       & -0.022 & 0.009 &      & -1.157 & 1.587 \\
          &       & 0.9   & -0.040 & 0.010 &       & -0.048 & 0.008 &      & -0.773 & 0.748 \\
          & 0.994 & 0.1   & 0.041 & 0.039 &        & 0.051 & 0.035 &       & -1.818 & 4.810 \\
          &       & 0.3   & 0.009 & 0.022 &        & 0.011 & 0.015 &       & -1.674 & 3.694 \\
          &       & 0.5   & 0.007 & 0.016 &        & -0.006 & 0.010 &      & -1.411 & 2.493 \\
          &       & 0.7   & -0.014 & 0.012 &       & -0.020 & 0.007 &      & -1.104 & 1.450 \\
          &       & 0.9   & -0.037 & 0.008 &       & -0.044 & 0.007 &      & -0.695 & 0.580 \\
          & 0.742 & 0.1   & 0.037 & 0.026 &        & 0.045 & 0.023 &       & -1.983 & 5.371 \\
          &       & 0.3   & 0.008 & 0.015 &        & 0.010 & 0.010 &       & -1.717 & 3.860 \\
          &       & 0.5   & 0.006 & 0.010 &        & -0.005 & 0.007 &      & -1.383 & 2.403 \\
          &       & 0.7   & -0.013 & 0.008 &       & -0.019 & 0.005 &      & -1.028 & 1.264 \\
          &       & 0.9   & -0.034 & 0.006 &       & -0.040 & 0.005 &      & -0.598 & 0.414 \\
    \bottomrule
    \end{tabular}%
  \label{tab:a3}%
\end{table}%

\newpage
\subsubsection{Experiment with model misspecification}\label{app:miss}

\begin{table}[H]
  \centering
  \caption{Performance in estimating quantile function $Q_y$ over noncovex set.}
   \begin{threeparttable}
    \begin{tabular}{rrrrrrrrrr}
    \toprule
    \multicolumn{1}{c}{\multirow{2}[4]{*}{$n$}} & \multicolumn{1}{c}{\multirow{2}[4]{*}{($\sigma^2, \lambda$)}} & \multicolumn{2}{c}{ICQR} &       & \multicolumn{2}{c}{ICER} &       & \multicolumn{2}{c}{COA} \\
\cmidrule{3-4}\cmidrule{6-7}\cmidrule{9-10}          &       & \multicolumn{1}{l}{Bias} & \multicolumn{1}{l}{MSE} &       & \multicolumn{1}{l}{Bias} & \multicolumn{1}{l}{MSE} &       & \multicolumn{1}{l}{Bias} & \multicolumn{1}{l}{MSE} \\
    \midrule
    50   & (1.88, 1.66)  & 0.075 & 0.401 &       & -0.028 & 0.353 &       & -4.510 & 31.867 \\
         & (1.63, 1.24)  & 0.087 & 0.405 &       & -0.016 & 0.355 &       & -4.485 & 31.640 \\
         & (1.35, 0.83)  & 0.096 & 0.411 &       & -0.008 & 0.355 &       & -4.463 & 31.448 \\
    100  & (1.88, 1.66)  & 0.067 & 0.270 &       & -0.022 & 0.228 &       & -4.637 & 32.881 \\
         & (1.63, 1.24)  & 0.076 & 0.273 &       & -0.014 & 0.229 &       & -4.614 & 32.639 \\
         & (1.35, 0.83)  & 0.082 & 0.277 &       & -0.008 & 0.229 &       & -4.593 & 32.432 \\
    200  & (1.88, 1.66)  & 0.056 & 0.180 &       & -0.010 & 0.147 &       & -4.691 & 33.218 \\
         & (1.63, 1.24)  & 0.062 & 0.182 &       & -0.004 & 0.148 &       & -4.668 & 32.975 \\
         & (1.35, 0.83)  & 0.069 & 0.185 &       & 0.000  & 0.148 &       & -4.648 & 32.779 \\
    500  & (1.88, 1.66)  & 0.036 & 0.100 &       & -0.009 & 0.079 &       & -4.733 & 33.649 \\
         & (1.63, 1.24)  & 0.040 & 0.101 &       & -0.005 & 0.080 &       & -4.710 & 33.400 \\
         & (1.35, 0.83)  & 0.044 & 0.102 &       & -0.002 & 0.080 &       & -4.690 & 33.192 \\
    1000 & (1.88, 1.66)  & 0.024 & 0.064 &       & -0.007 & 0.049 &       & -4.750 & 33.751 \\
         & (1.63, 1.24)  & 0.027 & 0.064 &       & -0.004 & 0.050 &       & -4.728 & 33.507 \\
         & (1.35, 0.83)  & 0.029 & 0.065 &       & -0.002 & 0.050 &       & -4.709 & 33.307 \\
    \bottomrule
    \end{tabular}%
    \begin{tablenotes}
        \setlength\labelsep{0pt}
        \footnotesize
        \item DGP: $y_i= x_{i} + 0.1x_{i}^2 + v_i-u_i$, where $x_{i} \sim U[1,10]$, $v_i \overset{\text{i.i.d.}}{\sim} N(0, \sigma_v^2)$, and $u_i \overset{\text{i.i.d.}}{\sim} N^+(0, \sigma_u^2)$.
    \end{tablenotes}
    \end{threeparttable}
  \label{tab:a5}%
\end{table}%

\newpage
\subsubsection{Experiment with outliers}\label{app:out}

\begin{table}[H]
  \centering
  \caption{Performance in estimating the quantile function $Q_y$ with three outliers.}
  \begin{threeparttable}
    \begin{tabular}{ccrrrrrrrrr}
    \toprule
    \multicolumn{2}{c}{\multirow{2}[4]{*}{($\sigma^2, \lambda$)}} & \multicolumn{1}{c}{\multirow{2}[4]{*}{$d$}} & \multicolumn{2}{c}{ICQR} &       & \multicolumn{2}{c}{ICER} &       & \multicolumn{2}{c}{COA} \\
\cmidrule{4-5}\cmidrule{7-8}\cmidrule{10-11}    \multicolumn{2}{c}{} &       & \multicolumn{1}{l}{Bias} & \multicolumn{1}{l}{MSE} &       & \multicolumn{1}{l}{Bias} & \multicolumn{1}{l}{MSE} &       & \multicolumn{1}{l}{Bias} & \multicolumn{1}{l}{MSE} \\
    \midrule
    \multicolumn{2}{c}{(1.88, 1.66)} & 1 & 0.029 & 0.104 &       & -0.011 & 0.081 &       & -1.844 & 19.823 \\
    \multicolumn{2}{c}{} & 2         & 0.032 & 0.105 &       & -0.007 & 0.082 &       & -1.828 & 19.748 \\
    \multicolumn{2}{c}{} & 3         & 0.034 & 0.105 &       & -0.006 & 0.082 &       & -1.814 & 19.686 \\
    \multicolumn{2}{c}{(1.63,  1.24)}& 1 & 0.035 & 0.232 &       & -0.021 & 0.202 &       & -1.760 & 19.840 \\
    \multicolumn{2}{c}{} & 2         & 0.044 & 0.236 &       & -0.012 & 0.203 &       & -1.743 & 19.781 \\
    \multicolumn{2}{c}{} & 3         & 0.050 & 0.238 &       & -0.005 & 0.203 &       & -1.727 & 19.732 \\
    \multicolumn{2}{c}{(1.35, 0.83)} & 1 & 0.011 & 0.408 &       & -0.033 & 0.370 &       & -1.618 & 19.691 \\
    \multicolumn{2}{c}{} & 2         & 0.024 & 0.411 &       & -0.018 & 0.371 &       & -1.598 & 19.639 \\
    \multicolumn{2}{c}{} & 3         & 0.035 & 0.414 &       & -0.006 & 0.372 &       & -1.582 & 19.592 \\
    \bottomrule
    \end{tabular}%
    \begin{tablenotes}
        \setlength\labelsep{0pt}
        \footnotesize
        \item DGP: $y_i = \prod_{d=1}^{D}\BX^{\frac{0.8}{d}}_{d,i} + v_i - u_i$, where $\BX=(\bx1, \bx2)^{'}$, $\bx1_{m} \sim U[1,10]$ ($m=1,\cdots, 200$), $\bx2_{n} \sim U[90,100]$ ($n=1,\cdots, 3$), $v_i \overset{\text{i.i.d.}}{\sim} N(0, \sigma_v^2)$, and $u_i \overset{\text{i.i.d.}}{\sim} N^+(0, \sigma_u^2)$.
        \end{tablenotes}
    \end{threeparttable}
  \label{tab:a7}%
\end{table}%

\end{document}